\def\href#1#2{{\tt #2}}
\def\SHAPE#1{\mbox{{\tt #1}}}
\def\suff{\mathit{suff}}
\newtheorem{theorem}{Theorem}
\newtheorem{lemma}{Lemma}
\newtheorem{example}{Example}
\begin{document}

\title{Multiseed lossless filtration
\thanks{An extended abstract of
    this work has been presented to the {\em Combinatorial Pattern
      Matching} conference (Istanbul, July 2004)}
}

\author{Gregory Kucherov, Laurent No{\'e}, Mikhail Roytberg
\thanks{Gregory Kucherov and Laurent No{\'e} are with the INRIA/LORIA, 
615, rue du Jardin Botanique, B.P. 101, 54602 
Villers-l\`es-Nancy, France, e-mail: {\tt [Gregory.Kucherov,Laurent.Noe]@loria.fr}}
\thanks{Mikhail Roytberg is with the 
Institute of Mathematical Problems in Biology,
Pushchino, Moscow Region, Russia, e-mail: {\tt roytberg@impb.psn.ru}}
}

\maketitle

\begin{abstract}
We study a method of seed-based lossless filtration for approximate string
matching and related bioinformatics applications. The method is based on a
simultaneous use of several spaced seeds rather
than a single seed as studied by Burkhardt and
Karkkainen~\cite{burkhardt03better}. We present algorithms to
compute several important parameters of seed families, study 
their combinatorial properties, and describe several techniques to
construct efficient families. We also report a large-scale application
of the proposed technique to the problem of oligonucleotide selection
for an EST sequence database.
\end{abstract}

\begin{keywords}
filtration, string matching, gapped seed, gapped Q-gram, local alignment, sequence
similarity, seed family, multiple spaced seeds, dynamic programming, EST, oligonucleotide selection.
\end{keywords}
\begin{section}{Introduction}

\PARstart{F}{iltering} is a widely-used technique in 
biosequence analysis.
Applied to the approximate string matching problem \cite{FlexiblePatternMatching02}, it
can be summarized by the following two-stage scheme: to find approximate 
occurrences (matches) of a given string in a sequence (text), one first quickly
discards (filters out) those sequence regions where matches cannot occur, and then checks out the remaining parts of the
sequence for actual matches. The filtering is done according to
small patterns of a specified form that the searched string is assumed to
share, in the exact way, with its approximate occurrences. 
A similar filtration scheme is used by heuristic local alignment
algorithms (\cite{GBLAST97,PatternHunter02,BLASTZ03,NoeKucherovBMCBioinformatics04}, to
mention a few): they
first identify potential similarity regions that share some patterns
and then actually check whether those regions represent a significant
similarity by computing a corresponding alignment. 

Two types of filtering should be distinguished -- {\em lossless} and
{\em lossy}. A lossless filtration guarantees to detect {\em all}
sequence fragments under interest, while a lossy filtration may miss some
of them, but still tries to detect a majority of them. Local
alignment algorithms usually use a lossy filtration. On the other
hand, the lossless filtration has been studied in the context of
approximate string matching problem
\cite{PW95,burkhardt03better}. In this paper, we focus on
the lossless filtration.

In the case of lossy filtration, its efficiency is measured by two
parameters, usually called {\em selectivity} and 
{\em sensitivity}. The sensitivity measures the part of sequence fragments
of interest that are missed by the filter (false negatives), and the
selectivity indicates what part of detected candidate fragments don't actually
represent a solution (false positives). In the case of lossless
filtration, only the selectivity parameter makes sense and is
therefore the main characteristic of the filtration efficiency. 

The choice of patterns that must be contained in the searched sequence
fragments is a key ingredient of the filtration algorithm. 
{\em Gapped seeds} (spaced seeds, gapped $q$-grams) have been recently
shown to significantly improve the filtration efficiency over the
``traditional'' technique of contiguous seeds. In the framework of lossy
filtration for sequence alignment, the use of designed gapped seeds has been
introduced by the {\sc PatternHunter} method \cite{PatternHunter02} and
then used by some other algorithms (e.g. \cite{BLASTZ03,NoeKucherovBMCBioinformatics04}). In 
\cite{FLASH93,Buhler02}, spaced seeds have been shown to improve 
indexing schemes for similarity search in sequence
databases. The estimation of the sensitivity of spaced seeds (as
well as of some extended seed models) has been subject of several recent
studies \cite{KLMT04,BKS03,BBVb03,KucherovNoePontyBIBE04,ChoiZhang03,MiklosCPM04}. In the
framework of lossless filtration for approximate pattern matching,
gapped seeds were studied in \cite{burkhardt03better} (see also
\cite{PW95}) and have also been shown to increase the
filtration efficiency considerably. 

In this paper, we study an extension of the lossless single-seed filtration
technique~\cite{burkhardt03better}. The extension is
based on using {\em seed families} rather than individual seeds. 
The idea of simultaneous use of multiple seeds for DNA local
alignment 
was already 
envisaged
in \cite{PatternHunter02} and applied in {\sc PatternHunter~II} software
\cite{PatternHunter04}. The problem of designing efficient seed
families has also been studied in \cite{SB04}. In \cite{Brown2004},
multiple seeds have been applied to the protein search. 
However, the issues analysed in the present paper are quite different,
due to the proposed requirement for the search to be lossless.

The rest of the paper is organized as follows. After formally
introducing the concept of multiple seed filtering in
Section~\ref{msfilter}, Section~\ref{dyn-progr} is devoted to dynamic programming
algorithms to compute several important parameters of seed
families. In Section~\ref{seed-design}, we first study several
combinatorial properties of families of seeds, and, in particular,
seeds having a periodic structure. These results are used to obtain a
method for constructing efficient seed families. We also outline a
heuristic genetic programming algorithm for constructing seed
families. Finally, in Section~\ref{experiments}, we present several
seed families we computed, and we report a large-scale experimental
application of the method to a practical problem of oligonucleotide
selection.

\end{section}

\begin{section}{Multiple seed filtering}
\label{msfilter}
  
A {\em seed} $Q$ (called also {\em spaced seed} or 
{\em gapped $q$-gram}) is a list $\{p_1,p_2,\ldots,p_d\}$ of positive
integers, called {\em matching positions}, such that $p_1<p_2<\ldots<p_d$. 
By convention, we always assume $p_1=0$. The {\em span} of a seed $Q$,
denoted $s(Q)$, is the quantity  $p_d+1$. The number $d$ of
matching positions is called the {\em weight} of the seed and denoted
$w(Q)$.
Often we will use a more visual representation of seeds, adopted in
\cite{burkhardt03better}, as words of length $s(Q)$ over the two-letter alphabet
\{\SHAPE{\#},\SHAPE{-}\}, where \SHAPE{\#} occurs at all
matching
positions and \SHAPE{-} at all positions in between. For example, seed
$\{0,1,2,4,6,9,10,11\}$ of weight $8$ and span $12$ is represented by
word \SHAPE{\#\#\#-\#-\#--\#\#\#}. The character \SHAPE{-} is
called a {\em joker}.
Note that, unless otherwise stated, the seed has the character \SHAPE{\#} at its first and last
positions. 

Intuitively, a seed specifies the set of patterns that, if shared by two 
sequences, indicate a possible similarity between them. Two 
sequences are similar if the Hamming distance between them is smaller
than a certain threshold. For example, sequences {\tt CACTCGT} and {\tt CACACTT}
are similar within Hamming distance 2 and this similarity is detected by the
seed \SHAPE{\#\#-\#} at position 2. We are interested in seeds that
detect {\em all} similarities of a given length with a given Hamming
distance. 

Formally, a {\em gapless similarity} (hereafter simply {\em similarity}) of two
sequences of length $m$ is a binary word 
$w\in\{0,1\}^m$
interpreted as a sequence
of matches ($1$'s) and mismatches ($0$'s) of individual characters from the
alphabet of input sequences.
A seed $Q=\{p_1,p_2,\ldots,p_d\}$ {\em matches} a similarity $w$ at position $i$,
$1\leq i \leq m-p_d+1$, iff for every $j \in [1..d]$, we have $w[i+p_j]=1$. 
In this case, we also say that seed $Q$ {\em has an occurrence} in
similarity $w$ at position $i$. A seed $Q$ is said to 
{\em detect a similarity} $w$ if $Q$ has at least one occurrence in
$w$. 

Given a similarity length $m$ and a number of mismatches $k$, consider all
similarities of length $m$ containing $k$ $0$'s and $(m-k)$
$1$'s. These similarities are called $(m,k)$-similarities. A seed $Q$
{\em solves the detection problem $(m,k)$} (for short, the
$(m,k)$-problem) iff all of ${m \choose k}$ $(m,k)$-similarities $w$
are detected by $Q$.
For example, one can check that seed \SHAPE{\#-\#\#--\#-\#\#} solves
the $(15,2)$-problem.

Note that the weight of the seed is directly related to the
{\em selectivity} of the corresponding filtration procedure. A larger
weight improves the selectivity, as less similarities will pass
through the filter. On the other hand, a smaller weight reduces the filtration 
efficiency. Therefore, the goal is to solve an $(m,k)$-problem by a 
seed with the largest possible weight.

Solving $(m,k)$-problems by a single seed has been studied by
Burkhardt and K\"arkk\"ainen \cite{burkhardt03better}. 
An extension we propose here is to use a {\em family of seeds},
instead of a single seed, to solve the
$(m,k)$-problem. 
Formally, a finite family of seeds $F=<Q_l>_{l=1}^{L}$
{\em solves an $(m,k)$-problem} iff for 
any $(m,k)$-similarity $w$, there exists a seed $Q_l\in F$ that detects
$w$. 

Note that the seeds of the family are used in the
complementary (or disjunctive) fashion, i.e. a similarity is detected
if it is detected by {\em one of the seeds}. This differs
from the conjunctive approach of \cite{PW95} where a similarity should
be detected by two seeds {\em simultaneously}.

The following example motivates the use of multiple seeds. In
 \cite{burkhardt03better}, it has been shown that a seed solving the
 $(25,2)$-problem has the maximal weight 12.
The only such seed (up to reversal) is
 \SHAPE{\#\#\#-\#--\#\#\#-\#--\#\#\#-\#}.
 However, the problem can be solved by the family composed of the
 following two seeds of weight 14: \SHAPE{\#\#\#\#\#-\#\#---\#\#\#\#\#-\#\#} and
 \SHAPE{\#-\#\#---\#\#\#\#\#-\#\#---\#\#\#\#}.

Clearly, using
these two seeds increases the selectivity of the search, as only 
similarities having 14 or more matching characters pass the filter vs
12 matching characters in the case of single seed. On uniform
Bernoulli sequences, this results in the decrease of the number of
candidate similarities by the factor of $|A|^2/2$, where $A$ is the
input alphabet. This illustrates
the advantage of the multiple seed approach: it allows to increase the
selectivity while preserving a lossless search. The price to pay for
this gain in selectivity is multiplying the work on identifying the
seed occurrences. In the case of large sequences, however, this is largely compensated by the 
decrease in the number of false positives caused by the increase of
the seed weight.

\end{section}

\begin{section}{Computing properties of seed families}
\label{dyn-progr}

Burkhardt and  K\"arkk\"ainen  \cite{burkhardt03better} proposed a dynamic
programming algorithm to compute the {\em optimal threshold} of a
given seed 
-- the minimal number of its occurrences over all possible $(m,k)$-similarities.
In this section, we describe an extension of this algorithm for seed 
families and, on the other hand, describe dynamic programming
algorithms for computing two other important parameters of seed
families that we will use in a later section.

Consider an $(m,k)$-problem and a family of seeds
$F=<Q_l>_{l=1}^{L}$. We need the following notation. 
\begin{itemize}  
\item $s_{max} =max \{s( Q_l )\}_{l=1}^{L}$, $s_{min} =min \{s( Q_l
  )\}_{l=1}^{L}$, 
\item for a binary word $w$ and a seed $Q_l$, $\suff(Q_l,w)\!=\!1$ if $Q_l$
  matches $w$ at position $(|w|\!-\!s(Q_l)\!+\!1)$ (i.e. matches a suffix of
  $w$), otherwise $\suff(Q_l,w)\!=\!0$, 
\item $last(w)=1$ if the last character of $w$ is $1$, otherwise $last(w)=0$,
\item $zeros(w)$ is the number of $0$'s in $w$.
\end{itemize}

\begin{subsection}{Optimal threshold}

Given an $(m,k)$-problem, a family of seeds $F=<Q_l>_{l=1}^{L}$ has
the {\em optimal threshold} $T_F(m,k)$ if every $(m,k)$-similarity has
at least $T_F(m,k)$ occurrences of seeds of $F$ and this is the
maximal number with this property. 
Note that overlapping occurrences
of a seed as well as occurrences of different seeds at the same
position are counted separately. For example, the singleton family
\{\SHAPE{\#\#\#-\#\#}\} has threshold 2 for the $(15,2)$-problem. 

Clearly, $F$ solves an $(m,k)$-problem if and only if $T_F(m,k)>0$. If
$T_F(m,k)>1$, then one can strengthen the detection criterion by
requiring several seed occurrences for a similarity to be detected. This shows
the importance of the optimal threshold parameter. 

We now describe a dynamic programming algorithm for computing the
optimal threshold $T_F(m,k)$. For a binary word $w$, consider the
quantity $T_F(m,k,w)$ defined as the minimal number of occurrences of
seeds of $F$ in all $(m,k)$-si\-mi\-la\-ri\-ties which have the suffix $w$.
By definition, $T_F(m,k)=T_F(m,k,\varepsilon)$. Assume that we precomputed
values ${\cal T}_F(j,w)=T_F(s_{max},j,w)$, for all $j\leq
\max\{k,s_{max}\}$, $|w| = s_{max}$. 
The algorithm is based on the following recurrence relations on
$T_F(i,j,w)$, for $i\geq s_{max}$. 
\[ 
  T_F(i,j,w[1..n]) = 
    \left\{\!\!\begin{array}{ll}
      {\cal T}_F(j,w),      &  \textrm{if } i\!=\!s_{max}, \\
      T_F(i\!-\!1,j\!-\!1,w[1..n\!-\!1]),                                & \textrm{if } w[n]\!=\!0,\\
      T_F(i\!-\!1,j  ,w[1..n\!-\!1]) + [\sum_{l=1}^{L} suf\!f(Q_l,w)], & \textrm{if } n\!=\!s_{max},  \\
      \min\{T_F(i,j,\mathtt{1}.w),T_F(i,j,\mathtt{0}.w)\},&\textrm{if } zeros(w)\!<\!j,\\ 
      T_F(i,j,\mathtt{1}.w), & \textrm{if } zeros(w)\!=\!j.\\
    \end{array} \right.
\]

The first relation is an initial condition of the recurrence. 
The second one is based on the fact that if the last symbol of
$w$ is $0$, then no seed can match a suffix of $w$ (as the last
position of a seed is always assumed to be a matching position). 
The third relation reduces the size of the problem by counting the number of suffix seed occurrences. 
The fourth one splits the counting into two cases, by considering
two possible characters occurring on the left of $w$. If $w$ already
contains $j$ $0$'s, then only $1$ can occur on the left of $w$, as
stated by the last relation. 

A dynamic programming implementation of the above recurrence allows to
compute \linebreak[4]$T_F(m,k,\varepsilon)$ in a bottom-up fashion, starting from
initial values ${\cal T}_F(j,w)$ and applying the above relations in
the order in which they are given. 
A straightforward dynamic programming implementation requires 
$O(m\cdot k\cdot 2^{(s_{max}+1)})$ time and space. 
However, the space complexity can be immediately improved: if values of $i$ are processed successively, then only 
$O(k\cdot 2^{(s_{max}+1)})$ space is needed. 
Furthermore, for each $i$ and $j$, 
it is not necessary to consider all $2^{(s_{max}+1)}$
different strings $w$, but only those which contain up to $j$ $0$'s. 
The number of those $w$ is 
$g(j,s_{max}) =  \sum_{e=0}^{j}  {s_{max}\choose e}$. For each $i$,
$j$ ranges from $0$ to $k$. 
Therefore, for each $i$, we need to store 
$f(k,s_{max})=\sum_{j=0}^{k}g(j,s_{max})=\sum_{j=0}^{k}{s_{max}
  \choose j}\cdot(k-j+1)$ values. This yields the same space
complexity as 
for computing the optimal threshold for one seed
\cite{burkhardt03better}. 

The quantity $\sum_{l=1}^{L} suf\!f(Q_l,w)$ can be precomputed
for all considered words $w$ in time $O(L\cdot g(k,s_{max}))$ and space 
$O(g(k,s_{max}))$, under the assumption that checking an individual match
is done in constant time. 
This leads to the overall time complexity $O(m\cdot f(k,s_{max}) +
L\cdot g(k,s_{max}))$ with the leading term  $ m\cdot f(k,s_{max})$
(as $L$ is usually small compared to $m$ and $g(k,s_{max})$ is 
smaller than $f(k,s_{max})$).
 
\end{subsection}
\begin{subsection}{Number of undetected similarities}
\label{undetected}

We now describe a dynamic programming algorithm that computes another
characteristic of a seed family, that will be used later in
Section~\ref{heuristics}. Consider an $(m,k)$-problem. Given a seed family
$F=<Q_l>_{l=1}^{L}$, we are 
interested in the number $U_F(m,k)$ of $(m,k)$-similarities that are not detected by $F$. 
For a binary word $w$, define $U_F(m,k,w)$ to be the number of undetected
$(m,k)$-similarities that have the suffix $w$. 

Similar to \cite{KLMT04}, let $X(F)$ be the set of binary words $w$
such that {\em (i)} $|w|\leq s_{max}$,
{\em (ii)} for any $Q_l\in F$, $\suff(Q_l,1^{s_{max}-|w|}w)=0$, and
{\em (iii)} no
proper suffix of $w$ satisfies {\em (ii)}. Note that word $0$ belongs to
$X(F)$, as the last position of every seed is a matching position. 

The following recurrence relations allow to compute  $U_F(i,j,w)$ for
$i\leq m$, $j\leq k$, and $|w|\leq s_{max}$. 
\[ 
  U_F(i,j,w[1..n])=
      {
        \left\{\!\!\!\begin{array}{ll}
          {i - |w| \choose j - zeros(w)}, &  \textrm{if } i < s_{min},\\
          0,  & \textrm{if } \exists l\in [1..L], \suff(Q_l,w) = 1,\\
          U_F(i-1,j-last(w),w[1..n-1]), & \textrm{if } w \in X(F),\\
          U_F(i,j,\mathtt{1}.w) + U_F(i,j,\mathtt{0}.w), &\textrm{if } zeros(w)<j,\\
          U_F(i,j,\mathtt{1}.w), & \textrm{if } zeros(w)=j.\\
        \end{array} \right.
      }
\]
The first condition says that if $i < s_{min}$, then no word of length
$i$ will be detected, hence the binomial coefficient. The second condition
is straightforward. The third relation follows from the definition of
$X(F)$ and allows to reduce the size of the problem. The last two
conditions are similar to those from the previous section. 
  
The set $X(F)$ can be precomputed in time $O(L\cdot g(k,s_{max}))$ and
the worst-case time complexity of the whole algorithm remains 
$O(m\cdot f(k,s_{max})+L\cdot g(k,s_{max}))$.

\end{subsection}  

\begin{subsection}{Contribution of a seed} 
\label{contribution}
Using a similar dynamic programming technique, one can 
compute, for a given seed of the
family, the number of $(m,k)$-similarities that are detected only by
this seed and not by the others. Together with the number of
undetected similarities, this parameter will be used later in
Section~\ref{heuristics}. 

Given an $(m,k)$-problem and a family $F=<Q_l>_{l=1}^{L}$, we define
$S_F(m,k,l)$ to be the number of $(m,k)$-similarities detected by the
seed $Q_l$ exclusively (through one or several occurrences), and
$S_F(m,k,l,w)$ to be the number of those similarities ending with the
suffix $w$.
A dynamic programming algorithm similar to the one described in the
previous sections can be applied to compute $S_F(m,k,l)$. 
The recurrence is given below.

\[
   S_F(i,j,l,w[1..n]) =
     \left\{\!\!\!\begin{array}{ll}
     0                                       & \mbox{if~} i\!<\!\!s_{min}\!\mbox{~or~}\!\exists l'\!\!\neq\!l\ \suff(Q_{l'},\!w)\!=\!1 \\
     S_F(i\!-\!1,j\!-\!1 , l , w[1..n\!-\!1])  & \mbox{if~} w[n] = 0 \\
     S_F(i\!-\!1,j,l,w[1..n\!-\!1])      & \mbox{if~} n=|Q_l|   \mbox{~and~} \suff(Q_{l},w)=0 \\
     S_F(i\!-\!1,j,l,w[1..n\!-\!1])       & \mbox{if~} n=s_{max} \mbox{~and~} \suff(Q_{l},w)=1 \\
     ~+~U_F(i\!-\!1,j,w[1..n\!-\!1])     & \mbox{and~} \forall l'\neq l,\suff(Q_{l'},w)=0,\\
     S_F(i,j,l,\mathtt{1}.w[1..n]) & \\
     ~+~S_F(i,j,l,\mathtt{0}.w[1..n])& \mbox{if~} zeros(w) < j\\
     S_F(i,j,l,\mathtt{1}.w[1..n])& \mbox{if~} zeros(w)=j\\
    \end{array}\right.
\]

The third and fourth relations play the principal role: if $Q_l$ does
not match a suffix of $w[1..n]$, then we simply drop out the
last letter. If $Q_l$ matches a suffix of $w[1..n]$, but no other
seed does, then we count prefixes matched by $Q_l$ exclusively (term
$S_F(i-1,j,l,w[1..n-1])$) 
together with prefixes matched by no seed at all (term
$U_F(i-1,j,w[1..n-1])$). The latter is computed by the algorithm of
the previous section. 

The complexity of computing $S_F(m,k,l)$ for a given $l$ is the same as
the complexity of dynamic programming algorithms from the previous
sections.
\end{subsection}
\end{section}

\begin{section}{Seed design}
\label{seed-design}

In the previous Section we showed how to compute various useful
characteristics of a given family of seeds. A much more difficult task
is to find an efficient seed family that solves a given
$(m,k)$-problem.
Note that there exists a trivial
solution where the family consists of all ${m \choose k}$ position
combinations, but this is in general unacceptable in practice because
of a huge number of seeds. 
Our goal is to find families of reasonable size (typically, with the
number of seeds smaller than ten), with a good filtration efficiency. 

In this section, we present several results that contribute to this
goal. 
In Section~\ref{one-joker},
we start with the case of single seed with a fixed number of jokers
and show, in particular, that for one joker, there exists one best
seed in a sense that will be defined. We then show in
Section~\ref{expansion} that a solution for a 
larger problem can be obtained from a smaller one by a regular
expansion operation. In Section~\ref{regular}, we focus on seeds that
have a periodic structure and show how those seeds can be constructed
by iterating some smaller seeds. We then show a way to build efficient
families of periodic seeds.
Finally, in Section~\ref{heuristics}, we briefly describe a
heuristic approach to constructing efficient seed families that we
used in the experimental part of this work presented in
Section~\ref{experiments}.

\begin{subsection}{Single seeds with a fixed number of jokers} 
\label{one-joker}

Assume that we fixed a class of seeds under interest (e.g. seeds of a
given minimal weight). 
One possible way to define the seed design problem is to fix a
similarity length $m$ and find a seed that solves the
$(m,k)$-problem with the largest possible value of $k$. A
complementary definition is to fix $k$ and minimize $m$ provided that
the $(m,k)$-problem is still solved. In this section, we 
adopt the second definition and present an optimal solution for one
particular case. 

For a seed $Q$ and a number of mismatches $k$, define the {\em $k$-critical
length} for $Q$ as the minimal value $m$ such that $Q$ solves the
$(m,k)$-problem. For a class of seeds ${\cal C}$ and a value $k$, a
seed is $k$-optimal in ${\cal C}$ if $Q$ has the minimal $k$-critical
length among all seeds of ${\cal C}$. 

One interesting class of seeds ${\cal C}$ is obtained by putting an
upper bound on the possible number of jokers in the seed, i.e. on the
number $(s(Q)-w(Q))$. We have found a general solution of the seed
design problem for the class ${\cal C}_1(n)$ consisting of seeds of
weight $d$ with only one joker, i.e. seeds $\SHAPE{\#}^{d-r}\SHAPE{-}\SHAPE{\#}^{r}$.

Consider first the case of one mismatch, i.e. $k=1$. A $1$-optimal seed from ${\cal C}_1(d)$ is
$\SHAPE{\#}^{d-r}\SHAPE{-}\SHAPE{\#}^{r}$ with 
$r = \lfloor d/2 \rfloor$. To see this, consider an arbitrary seed
$Q=\SHAPE{\#}^{p}\SHAPE{-}\SHAPE{\#}^{q}$, $p+q=d$, and
assume by symmetry that 
$p\geq q$. Observe that the longest $(m,1)$-similarity that is
not detected by $Q$ is $1^{p-1}01^{p+q}$ of length $(2p+q)$. Therefore,
we have to minimize $2p+q=d+p$, and since $p\geq \lceil d/2 \rceil$,
the minimum is reached for $p= \lceil d/2 \rceil$, $q = \lfloor d/2 \rfloor$. 

However, for $k\geq 2$, an optimal seed has an asymmetric structure
described by the following theorem. 

\begin{theorem}
\label{best-seed}
Let $n$ be an integer and $r = [ d/3 ]$ ($[x]$ is the closest
integer to $x$). 
For every $k \geq 2$, seed $Q(d) = \SHAPE{\#}^{d-r}\SHAPE{-}\SHAPE{\#}^{r}$ 
is $k$-optimal among the seeds of ${\cal C}_1(d)$.
\end{theorem}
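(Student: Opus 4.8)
The plan is to reduce the statement to a purely combinatorial optimization over binary words and then solve it. Write the candidate seed as $\#^{p}\!-\!\#^{q}$ with $p=d-r$ and $q=r$; since a seed and its mirror image have the same $k$-critical length, I may assume $p\ge q$, so that the task becomes to choose $q\in\{1,\dots,\lfloor d/2\rfloor\}$ minimizing the $k$-critical length. First I would establish a run-length characterization of the similarities left undetected by $\#^{p}-\#^{q}$. Writing an $(m,k)$-similarity as $1^{a_0}0\,1^{a_1}0\cdots0\,1^{a_k}$ with $a_i\ge 0$ and $\sum_i a_i=m-k$, a short case analysis on where the joker of a hypothetical occurrence falls (on a $1$, forcing a run of $d+1$ consecutive matches; or on a $0$, forcing $\ge p$ matches immediately to its left and $\ge q$ immediately to its right) shows that the word is undetected if and only if (A) every $a_i\le d$, and (B) for each $j\in[1..k]$ one has $a_{j-1}<p$ or $a_j<q$. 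Consequently the longest undetected similarity has length $k+M(p,q,k)$, where $M(p,q,k)=\max\sum_{i=0}^{k}a_i$ over all run vectors satisfying (A)--(B), and the $k$-critical length equals $1+k+M(p,q,k)$. Since $1+k$ does not depend on the seed, minimizing the critical length is exactly minimizing $M(p,q,k)$.

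Next I would compute $M(p,q,k)$. Constraint (B) is equivalent to the single rule ``an entry $\ge p$ must be immediately followed by an entry $<q$''. An exchange argument then shows that in an optimal run vector every entry may be taken in $\{q-1,\,p-1,\,d\}$, and that a maximal run $d$ (a ``high'' entry) is always immediately followed by the value $q-1$, except possibly for one high entry placed in the last coordinate. Hence an optimal vector decomposes into blocks $(d,q-1)$ of weight $p+2q-1$ over two coordinates, isolated entries $p-1$, and at most one terminal entry $d$. Maximizing the total weight over $k+1$ coordinates then splits into two regimes according to the per-coordinate yields $(p+2q-1)/2$ of a block versus $p-1$ of an isolated entry: when $p\le 2q$ the blocks win and $M$ grows like $d+\tfrac{k}{2}(d+q-1)$, whereas when $p\ge 2q+1$ the isolated entries win and $M$ equals $d+k(d-q-1)$, with the usual small corrections for the parity of $k$ and for the terminal coordinate.

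Finally I would minimize over $q$. In terms of $q$ (with $p=d-q$), the first regime gives a value increasing in $q$ and the second a value decreasing in $q$, so $M(p,q,k)$ is unimodal with minimum at the regime boundary $p\approx 2q$, i.e.\ $q\approx d/3$; comparing the decreasing-regime value at $q$ with the increasing-regime value at $q+1$ (the balance condition is $3q\ge d-1$) pins the integer minimizer at the nearest integer $[d/3]$, as one checks in each residue class of $d$ modulo $3$. The case $k=1$ is genuinely excluded here: with only two runs there is no periodic interior, the boundary term dominates, and the optimum is the balanced seed instead. I expect the main obstacle to be precisely this last step for general $k$: establishing the exact value of $M(p,q,k)$, including the parity-of-$k$ and terminal-coordinate corrections, with enough precision to prove that the integer minimizer is exactly $[d/3]$ and, crucially, that it is \emph{independent of $k$} for all $k\ge2$, rather than merely asymptotically $d/3$. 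The two competing families of extremal words both scale linearly in $k$, and the content of the theorem is that their crossover stays fixed at $q=[d/3]$ uniformly in $k$.
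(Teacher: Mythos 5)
Your proposal is correct and follows essentially the same route as the paper: both reduce the problem to computing the length of the longest undetected word in $(1^*0)^k1^*$, identify the two regimes $p\geq 2q+1$ (extremal words built from runs $1^{p-1}0$) versus $p\leq 2q+1$ (extremal words built from blocks $1^{p+q}01^{q-1}0$), obtain the same closed forms including the parity-of-$k$ correction, and locate the crossover at $p=2q+1$, i.e.\ $q=[d/3]$. The only difference is bookkeeping --- you derive the extremal length by a static run-length characterization plus an exchange argument, where the paper uses a recurrence on $L(k)$ and $L'(k)$ --- and the step you flag as the ``main obstacle'' is exactly the routine case analysis on $d\bmod 3$ that the paper also performs, so there is no gap.
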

\begin{proof}
Again, consider a seed $Q=\SHAPE{\#}^{p}\SHAPE{-}\SHAPE{\#}^{q}$,
$p+q=d$, and assume that $p\geq q$. 
Consider the longest word $S(k)$ from $(1^* 0)^k 1^*$, $k\geq 1$, which
is not detected by $Q$ and let $L(k)$ is the length of $S(k)$. By the above remark,
$S(1)=1^{p-1}01^{p+q}$ and $L(1)=2p+q$. 

It is
easily seen that for every $k$, $S(k)$ starts either with 
$1^{p-1}0$, or with $1^{p+q}01^{q-1}0$. 
Define $L'(k)$ to be the maximal
length of a word from $(1^* 0)^k 1^*$ that is not detected by $Q$ and
starts with $1^{q-1}0$. Since prefix $1^{q-1}0$ implies no additional
constraint on the rest of the word, we have $L'(k)=q+L(k-1)$. 
Observe that 
$L'(1)=p+2q$ (word $1^{q-1}01^{p+q}$). To summarize, we have the
following recurrences for $k\geq 2$:
\begin{eqnarray}
L'(k) & = & q+L(k-1),\\
L(k) & = & \max\{p+L(k-1),p+q+1+L'(k-1)\},\label{L}
\end{eqnarray}
with initial conditions $L'(1)=p+2q$, $L(1)=2p+q$.

Two cases should be distinguished. If $p\geq 2q+1$, then the
straightforward induction shows that the first term in (\ref{L}) is
always greater, and we have 
\begin{equation}
L(k)=(k+1)p+q, \label{p>2q+1}
\end{equation}
and the corresponding longest word is 
\begin{equation}
S(k)=( 1^{p-1} 0 )^k 1^{p+q}.
\end{equation}

If $q\leq p\leq 2q+1$, then by induction, we obtain
\begin{eqnarray}
L(k)=\begin{cases}
(\ell+1)p+(k+1)q+\ell &\mbox{ if~} k = 2\ell,\\
(\ell+2)p+kq+\ell &\mbox{ if~} k = 2\ell+1,
\end{cases} \label{p<2q+1}
\end{eqnarray}
and
\begin{eqnarray}
S(k)=  \begin{cases}
    ( 1^{p+q} 0 1^{q-1} 0 )^\ell 1^{p+q}          &\mbox{ if~} k = 2\ell,\\
     1^{p-1} 0 (1^{p+q} 0 1^{q-1} 0)^\ell 1^{p+q}&\mbox{ if~} k = 2\ell+1.
  \end{cases} 
\end{eqnarray}

By definition of $L(k)$, seed $\SHAPE{\#}^{p}\SHAPE{-}\SHAPE{\#}^{q}$
detects any word from $(1^* 0)^k 1^*$ of length $(L(k)+1)$ or more,
and this is the tight bound. Therefore, we have to find $p,q$ which
minimize $L(k)$. Recall that $p+q=d$, and observe that for $p\geq
2q+1$, $L(k)$ (defined by (\ref{p>2q+1})) 
is increasing on $p$, while for $p\leq 2q+1$, $L(k)$ (defined by
(\ref{p<2q+1})) is decreasing on $p$. Therefore, both functions reach
its minimum when $p=2q+1$. Therefore, if $d\equiv 1 \pmod{3}$, we
obtain $q=\lfloor d/3\rfloor$ and $p=d-q$. If $d\equiv 0 \pmod{3}$, a
routine computation shows that the minimum is reached at $q=d/3$,
$p=2d/3$, and if $d\equiv 2 \pmod{3}$, the minimum is reached at
$q=\lceil d/3\rceil$, $p=d-q$. Putting the three cases together
results in $q=[d/3]$, $p=d-q$. 
\end{proof}

To illustrate Theorem~\ref{best-seed}, seed \SHAPE{\#\#\#\#-\#\#} is optimal among
all seeds of weight $6$ with one joker. This means that this seed solves
the $(m,2)$-problem for all $m \geq 16$ and this is the smallest
possible bound over all seeds of this class. Similarly, 
this seed solves the $(m,3)$-problem for all $m \geq 20$, which is the
best possible bound, etc.
\end{subsection}

\begin{subsection}{Regular expansion and contraction of seeds}
\label{expansion}
We now show that seeds solving larger problems can be obtained from seeds
solving smaller problems, and vice versa, using regular expansion and
regular contraction operations. 

Given a seed $Q$ , its {\em $i$-regular expansion} $i \otimes Q $ is obtained
by multiplying each matching position by $i$. This is equivalent to
inserting $i-1$ jokers between every two successive positions along
the seed. For example, if $Q = \{0,2,3,5\}$ (or \SHAPE{\#-\#\#-\#}),
then the $2$-regular expansion of $Q$ is $2\otimes Q = \{0,4,6,10\}$
(or  \SHAPE{\#---\#-\#---\#}). Given a family $F$, its $i$-regular
expansion $i \otimes F$ is the family obtained by applying the
$i$-regular expansion on each seed of $F$. 

\begin{lemma} 
\label{expand}
If a family $F$ solves an $(m,k)$-problem,
then the $(i m,(i+1)k - 1)$-problem is solved both by family $F$ and by
its $i$-regular expansion $F_i = i \otimes F$. 
\end{lemma}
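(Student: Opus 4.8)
The plan is to prove the two claims of Lemma~\ref{expand} separately, namely that the $(im,(i+1)k-1)$-problem is solved (i) by the expanded family $F_i = i\otimes F$ and (ii) by the original family $F$ itself. For both parts I would argue by contraposition: assume some $(im,(i+1)k-1)$-similarity $w$ escapes detection, and construct from it an $(m,k)$-similarity that $F$ fails to detect, contradicting the hypothesis that $F$ solves the $(m,k)$-problem.

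For part (i), the key observation is that an occurrence of $i\otimes Q$ in a word $w$ at position $j$ tests exactly the positions $j, j+i, j+2i, \dots$ of $w$; that is, it reads only one residue class modulo $i$. So I would partition the positions $\{1,\dots,im\}$ of the long similarity $w$ into $i$ arithmetic progressions of common difference $i$, each of length $m$, indexed by the starting residue. Reading off the match/mismatch values of $w$ along one such progression yields a word $w_t$ of length $m$, and the expanded seed $i\otimes Q$ detects $w$ (aligned to class $t$) precisely when the original seed $Q$ detects $w_t$. The counting step is the heart of the argument: $w$ has $(i+1)k-1$ zeros distributed among the $i$ subwords $w_1,\dots,w_i$, so by pigeonhole at least one subword $w_t$ contains at most $\lfloor((i+1)k-1)/i\rfloor = k$ zeros (since $((i+1)k-1)/i = k + (k-1)/i < k+1$). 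That subword, padded with $1$'s to make exactly $k$ zeros and length $m$ if it has fewer, is a genuine $(m,k)$-similarity, which $F$ detects by hypothesis; the detecting seed's expansion then detects $w$. This establishes (i).

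For part (ii), I would show directly that the unexpanded $F$ already detects every $(im,(i+1)k-1)$-similarity $w$. Here the idea is to look at windows of $w$ of length $m$ (the span-relevant length for the original problem). Since $w$ has $(i+1)k-1$ zeros spread over length $im$, an averaging/pigeonhole argument shows that some length-$m$ contiguous window of $w$ contains at most $k$ zeros: the number of zeros per unit length is $((i+1)k-1)/(im) < (k+1)/(im)\cdot\ldots$, so sliding a window of length $m$ across $w$ and bounding the minimum zero-count against the average gives a window with at most $k$ mismatches. Restricting $w$ to that window gives an $(m,k)$-similarity (after the same padding), which $F$ detects; the occurrence inside the window is an occurrence in $w$, so $F$ detects $w$.

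The main obstacle I anticipate is making the pigeonhole bounds fully rigorous at the boundary, in particular verifying that $\lfloor((i+1)k-1)/i\rfloor$ really equals $k$ and that the padding of a short subword to an exact $(m,k)$-similarity does not destroy detection (padding with matches only can create, never remove, seed occurrences, so this should be safe, but it must be stated). The residue-class decomposition in part (i) is clean once the occurrence-testing structure of $i\otimes Q$ is spelled out; the subtler point is the sliding-window count in part (ii), where I would need to be careful that a window containing the average number of zeros actually exists among the $im-m+1$ possible windows, which follows because the total zero-count over all windows, divided by the number of windows, is at most $k$.
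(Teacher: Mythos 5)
Your part (i) is essentially the paper's own argument for the expanded family: decompose the positions of $w$ into the $i$ residue classes modulo $i$, observe that an occurrence of $i\otimes Q$ reads exactly one class, and apply the pigeonhole principle to the $(i+1)k-1$ zeros distributed over the $i$ classes. One caveat: your arithmetic $\lfloor((i+1)k-1)/i\rfloor=k$ rests on $(k-1)/i<1$, which holds only when $k\le i$; the paper's two-sentence proof silently relies on the same inequality, so this is a boundary restriction you share with the paper rather than a deviation, but the equality as you wrote it is false for $k>i$, and this is exactly the delicate point you anticipated.

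Part (ii) is where you genuinely depart from the paper, and the departure does not work. You propose to average the zero-count over all $(i-1)m+1$ sliding windows of length $m$ and to conclude that some window has at most $k$ zeros. But each zero lies in up to $m$ of those windows, so the total incidence count can be as large as $((i+1)k-1)\cdot m$, and the average per window is then close to $\frac{(i+1)k-1}{i-1}$, which exceeds $k$ for every $i\ge 2$ (for $i=2$, $k=2$ it is close to $5$). Hence your claim that ``the total zero-count over all windows, divided by the number of windows, is at most $k$'' is false, and the minimum-versus-average step collapses. The paper instead applies the pigeonhole only to the $i$ \emph{disjoint} windows $[1..m],[m+1..2m],\ldots,[(i-1)m+1..im]$: if each contained at least $k+1$ zeros, the total would be at least $i(k+1)$, which exceeds $(i+1)k-1$ (again provided $k\le i$), so some disjoint block carries at most $k$ mismatches and $F$ detects $w$ inside that block. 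Replacing your sliding-window average by this disjoint-block count repairs part (ii); your monotonicity remark (replacing zeros by ones can only create, never destroy, seed occurrences) is the right way to handle blocks with fewer than $k$ zeros.
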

\begin{proof}
Consider an $(im,(i+1)k-1)$-similarity $w$. By the pigeon
hole principle, it contains at least one substring of length $m$ with
$k$ mismatches or less, and therefore $F$ solves the
$(im,(i+1)k-1)$-problem. On the other hand, consider $i$ disjoint
subsequences of $w$ each one consisting of $m$ positions
equal modulo $i$. Again, by the pigeon hole principle, at least one of
them contains $k$ mismatches or less, and therefore the 
$(im,(i+1)k-1)$-problem is solved by $i \otimes F$. 
\end{proof}

The following lemma is the inverse of Lemma~\ref{expand}, it states
that if seeds solving a bigger problem have a regular structure, then
a solution for a smaller problem can be obtained by the regular
contraction operation, inverse to the regular expansion. 

\begin{lemma} 
If a family $F_i = i \otimes F$ solves an $(im,k)$-problem, then
$F$ solves both the $(i m,k)$-problem and the $(m,\lfloor k/i
\rfloor)$-problem. 
\end{lemma}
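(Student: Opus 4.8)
The plan is to reduce both assertions to one structural observation relating a seed $Q$ to its expansion $i\otimes Q$ on words of length $im$. Write any $w'\in\{0,1\}^{im}$ as the interleaving of its $i$ residue-class subsequences ${w'}^{(0)},\dots,{w'}^{(i-1)}$, where ${w'}^{(r)}$ (of length $m$) collects the letters of $w'$ at the positions congruent to $r$ modulo $i$. Since the matching positions of $i\otimes Q$ are $\{ip_1,\dots,ip_d\}$, every occurrence of $i\otimes Q$ in $w'$ lives inside a single residue class, and a short index computation shows that $i\otimes Q$ detects $w'$ if and only if $Q$ detects one of the subsequences ${w'}^{(r)}$. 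This equivalence is the workhorse of both parts. I would also record one monotonicity fact: if a family solves the $(n,k)$-problem, it solves the $(n,k')$-problem for every $k'\le k$, because a word with $k'$ mismatches dominates (has $1$'s wherever it does) some $k$-mismatch word obtained by flipping $1$'s into $0$'s, and detection of the dominated word carries over to the dominating one.

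For the $(m,\lfloor k/i\rfloor)$-problem I would take an arbitrary $(m,\lfloor k/i\rfloor)$-similarity $u$ and build $w'$ of length $im$ by setting every residue subsequence equal to $u$. Then $w'$ has $i\lfloor k/i\rfloor\le k$ mismatches, so by monotonicity $i\otimes F$ detects it; by the workhorse equivalence some $i\otimes Q$ detects $w'$ precisely because the underlying $Q$ detects some ${w'}^{(r)}=u$. Hence $F$ detects $u$, and since $u$ was arbitrary, $F$ solves the $(m,\lfloor k/i\rfloor)$-problem.

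For the $(im,k)$-problem I would argue by contraposition, and the point to watch is that $i\otimes Q$ detecting a word does \emph{not} imply $Q$ detects it (for instance \SHAPE{\#-\#} detects $101$ while \SHAPE{\#\#} does not), so no per-word domination is available and the argument must be global. Suppose $F$ fails on some $(im,k)$-similarity $w$; then $w$ carries no occurrence of any seed of $F$, hence neither does any contiguous substring of $w$. Cutting $w$ into $i$ consecutive blocks of length $m$, a pigeonhole argument yields a block $u$ with at most $\lfloor k/i\rfloor$ mismatches, and $u$ is undetected by $F$. Interleaving $i$ copies of $u$ produces $w'$ of length $im$ with at most $k$ mismatches whose every residue subsequence equals $u$; by the workhorse equivalence no $i\otimes Q$ detects $w'$, so $i\otimes F$ misses a word with at most $k$ mismatches. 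By monotonicity this contradicts the hypothesis that $i\otimes F$ solves the $(im,k)$-problem, so $F$ must solve it.

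The main obstacle, as just flagged, is exactly that the expansion does not dominate the original seed word-by-word, so the naive ``contract the occurrence'' attempt fails; the remedy is to run the contrapositive and manufacture, from a single undetected length-$im$ word, a length-$m$ block with few mismatches whose interleaved copies defeat $i\otimes F$. What remains is purely bookkeeping on mismatch counts (the pigeonhole delivering $\le\lfloor k/i\rfloor$ and the factor-$i$ blow-up staying $\le k$) together with the index arithmetic underlying the residue-subsequence equivalence.
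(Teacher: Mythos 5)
Your proof is correct, and its engine --- decomposing a length-$im$ word into its $i$ residue-class subsequences and observing that occurrences of $i\otimes Q$ correspond exactly to occurrences of $Q$ inside a single residue class --- is the same one the paper uses, there packaged as the mismatch-preserving bijection $w\mapsto w'=w_1w_2\ldots w_{i-1}w_0$ that concatenates those subsequences. Your handling of the $(m,\lfloor k/i\rfloor)$-claim (interleave $i$ copies of $v$, invoke the correspondence) matches the paper's $w'=v^i$ argument. Where you genuinely diverge is on the $(im,k)$-claim: the paper runs the bijection forward --- every $(im,k)$-similarity $w'$ is the image of some $(im,k)$-similarity $w$, which $F_i$ detects by hypothesis, so $F$ detects $w'$, and in fact with the stronger conclusion that the occurrence lies inside one of the blocks $[1..m],\ldots,[(i-1)m+1..im]$ --- whereas you argue by contraposition, extracting via pigeonhole an undetected length-$m$ block with at most $\lfloor k/i\rfloor$ mismatches and interleaving $i$ copies of it to defeat $i\otimes F$. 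Both are valid; the paper's route is shorter and yields the localized-occurrence refinement for free, while yours is more self-contained (no appeal to surjectivity of the interleaving map) and makes explicit the monotonicity in the number of mismatches that the paper uses silently. Your counterexample showing that $i\otimes Q$ detecting a word does not imply $Q$ detects it correctly pinpoints why a naive per-word contraction fails and why some global device is required.
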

\begin{proof}
One can even show that $F$ solves the $(i m,k)$-problem with the
additional restriction for $F$ to match inside one of
the position intervals $[1..m],[m+1..2m],\ldots,[(i-1)m+1..im]$. This is done by
using the bijective mapping from Lemma~\ref{expand}: given 
an $(im,k)$-similarity $w$, consider $i$ disjoint subsequences
$w_j$ ($0\leq j\leq i-1$) of $w$ obtained by picking $m$ positions
equal to $j$ modulo $i$, and then consider the concatenation
$w'=w_1w_2\ldots w_{i-1}w_0$. 

For every $(im,k)$-similarity $w'$, its inverse image $w$ is detected
by $F_i$, and therefore $F$ detects $w'$ at one of the intervals
$[1..m],[m+1..2m],\ldots,[(i-1)m+1..im]$. Futhermore, for any
$(m,\lfloor k/i\rfloor)$-similarity $v$, consider $w'=v^i$ and its
inverse image $w$. As $w'$ is detected by $F_i$, $v$ is
detected by $F$. 
\end{proof}

\begin{example}
To illustrate the two lemmas above, we give the following example
pointed out in \cite{burkhardt03better}. The following two seeds are
the only seeds of weight $12$ that solve the $(50,5)$-problem:
\SHAPE{\#-\#-\#---\#-----\#-\#-\#---\#-----\#-\#-\#---\#} and
\SHAPE{\#\#\#-\#--\#\#\#-\#--\#\#\#-\#}. The first one is the
$2$-regular expansion of the second. The second one is the only seed
of weight $12$ that solves the $(25,2)$-problem.
\end{example}

The regular expansion allows, in some cases, to obtain an efficient
solution for a larger problem by reducing it to a smaller problem for
which an optimal or a near-optimal solution is known.
\end{subsection}
  
\begin{subsection}{Periodic seeds}   
\label{regular}
    
In this section, we study seeds with a periodic structure that can be
obtained by iterating a smaller seed. Such seeds often turn out to be
among maximally weighted seeds solving a given
$(m,k)$-problem. Interestingly, this contrasts with the lossy
framework where optimal seeds usually have a ``random'' irregular
structure. 
   
Consider two seeds $Q_1,\!Q_2$ represented as words over
$\{\SHAPE{\#},\!\SHAPE{-}\}$. In this section, we lift the assumption
that a seed must start and end with a matching position. We denote \mbox{$[Q_1,\!Q_2]^i$} the seed defined as
$(Q_1 Q_2)^{i} Q_1$. For example, \mbox{$[\SHAPE{\#\#\#-\#},\!\SHAPE{--}]^2\!=\!\SHAPE{\#\#\#-\#--\#\#\#-\#--\#\#\#-\#}$}.

We also need a modification of the $(m,k)$-problem, where
$(m,k)$-similarities are considered modulo a cyclic permutation. We
say that a seed family $F$ solves a {\em cyclic $(m,k)$-problem}, if
for every $(m,k)$-similarity $w$, $F$ detects one of cyclic
permutations of $w$. Trivially, if $F$
solves an $(m,k)$-problem, it also solves the cyclic
$(m,k)$-problem. To distinguish from a cyclic problem, we call
sometimes an $(m,k)$-problem a {\em linear} problem. 

We first restrict ourselves to the single-seed case. The following
lemma demonstrates that iterating smaller seeds solving a cyclic
problem allows to obtain a solution for bigger problems, for the same
number of mismatches. 

\begin{lemma}
\label{cyclic}
If a seed $Q$ solves a {\em cyclic}
$(m,k)$-problem, then for every $i\geq 0$, the seed 
$Q_i =[Q,-^{(m-s(Q))}]^i$ 
solves the linear 
$(m \cdot (i+1)+s(Q)- 1,k)$-problem. If $i\neq 0$, the inverse holds
too. 
\end{lemma}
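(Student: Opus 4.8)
The plan is to prove both implications by relating cyclic occurrences of $Q$ in a word of length $m$ to linear occurrences of $Q_i$ in a word of length $M = m(i+1)+s-1$, through a folding of positions modulo $m$. Write $s=s(Q)$, and note $s\le m$ since $Q_i$ contains the block $-^{m-s}$. The matching positions of $Q_i$ are exactly $\bigcup_{j=0}^{i}\big(jm+\{\text{matching positions of }Q\}\big)$, so $Q_i$ occurs in a word $w$ at a position $t$ (and necessarily $t\in\{1,\dots,m\}$, since a short computation shows there are exactly $m$ admissible start positions) if and only if $Q$ occurs linearly in $w$ at each of $t,\,t+m,\,\dots,\,t+im$.

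For the forward implication I would, given an arbitrary $(M,k)$-similarity $w$, define a cyclic word $u$ of length $m$ by folding: set $u[\rho]=1$ iff $w[q]=1$ for every $q\equiv\rho\pmod m$ with $1\le q\le M$. Since each $0$ of $w$ lies in a single residue class, $u$ carries at most $k$ zeros; flipping some $1$'s of $u$ to $0$'s to reach exactly $k$ zeros only shrinks the set of $1$-positions, so any occurrence of $Q$ in the padded word is still an occurrence in $u$, and the hypothesis that $Q$ solves the cyclic $(m,k)$-problem then yields a cyclic occurrence of $Q$ in $u$ at some $t$. The key point is that this cyclic occurrence forces a genuine $Q_i$-occurrence at $t$: for every matching position $p_\ell$ of $Q$, the class $\rho_\ell\equiv t+p_\ell\pmod m$ is entirely $1$ in $w$ by construction, while every position $t+jm+p_\ell$ with $0\le j\le i$ lies in $[1,M]$ and is congruent to $\rho_\ell$, hence equals $1$. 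This settles the wrap-around automatically and is exactly where the $+s-1$ slack in $M$ is consumed.

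For the converse (assuming $i\ge 1$) I would argue contrapositively: from a cyclic $(m,k)$-similarity $u$ undetected by $Q$, I must build a \emph{linear} $(M,k)$-similarity undetected by $Q_i$, and the whole difficulty is to keep the number of zeros equal to exactly $k$ rather than letting it grow with the number of blocks. The idea is to plant, in each residue class $\rho$ with $u[\rho]=0$, a \emph{single} $0$ of $w$, placed at the offset-$m$ representative $\rho+m$, and to fill every other position of $w$ with $1$; this gives exactly $k$ zeros and folds back to $u$. To see $Q_i$ cannot occur, observe that for any candidate $t$ and matching position $p_\ell$ one has $t+p_\ell=\rho_\ell+cm$ with $c\in\{0,1\}$ (here $s\le m$ is used), so the positions $t+jm+p_\ell$ required by $Q_i$ realize the multiples $c,c+1,\dots,c+i$ of $m$ inside class $\rho_\ell$; since $i\ge 1$ this range always contains the multiple $1$, namely the position $\rho_\ell+m$ where every zero-class was blocked. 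Hence a $Q_i$-occurrence at $t$ would force $u[\rho_\ell]=1$ for all $\ell$, i.e.\ a cyclic occurrence of $Q$ in $u$, contradicting the choice of $u$.

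The main obstacle is precisely this zero-accounting in the converse: the naive unfolding $u\mapsto 1^{im}\cdot\big(u\,u[1..s-1]\big)$ is undetected but carries $k$ plus the zeros duplicated by the wrap, so it only refutes an $(M,k')$-problem with $k'>k$. Concentrating all the blocking power into one zero per class at a common offset is what repairs the count, and the requirement $i\ne 0$ is exactly the condition that makes the offset-$m$ representative $\rho_\ell+m$ one of the positions inspected by $Q_i$ (via the block index $j=1-c$). I would finish with the routine boundary bookkeeping: $\rho+m\le M$ because $M\ge 2m$ when $i\ge1$, and classes with $\rho\ge s$ still contain their offset-$m$ representative as soon as $i\ge1$.
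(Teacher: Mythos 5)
Your proof is correct and follows essentially the same route as the paper's: the forward direction uses the identical folding of positions modulo $m$, and your converse (one zero per zero residue class, planted at the offset-$m$ representative) is exactly the paper's restriction to $(m(i+1)+s(Q)-1,k)$-similarities whose mismatches all lie in the window $[m,2m-1]$, merely phrased contrapositively and argued via residues rather than via the prefix/suffix split of an occurrence of $Q$ across the window. Your explicit handling of the zero count (padding the folded word to exactly $k$ zeros) is in fact slightly more careful than the paper's.
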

\begin{proof}
$\Rightarrow$ 
Consider an $(m \cdot (i+1) + s(Q)- 1,k)$-similarity
$u$. Transform $u$ into a similarity $u'$ for the cyclic
$(m,k)$-problem as follows. For each mismatch position $\ell$ of $u$,
set $0$ at position $(\ell\mod m)$ in $u'$. The other
positions of $u'$ are set to $1$. Clearly, there are at
most $k$ $0$'s in $u$. 
As $Q$ solves the $(m,k)$-cyclic problem, we can find at least one
position $j$, $1\leq j \leq m$, such that $Q$ detects $u'$ cyclicly. 
        
We show now that $Q_i$ matches at position $j$ of $u$ (which is a
valid position as $1 \leq j\leq m$ and $s(Q_i)=im + s(Q)$). As the
positions of $1$ in $u$ are projected modulo $m$ to
matching positions of $Q$, then there is no $0$ under any matching
element of $Q_i$, and thus $Q_i$ detects $u$.
        
$\Leftarrow$ 
Consider a seed $Q_i= [Q,-^{(m-s(Q))}]^i$ solving the 
$(m\cdot (i+1) + s(Q) - 1,k)$-problem. As $i>0$, consider
$(m\cdot (i+1) + s(Q) - 1,k)$-similarities having all their mismatches
located inside the interval $[m,2m-1]$. For each such similarity,
there exists a position $j$, $1\leq j \leq m$, such that $Q_i$ detects
it. 
Note that
the span of $Q_i$ is at least $m + s(Q)$, which implies that there is
either an entire occurrence of $Q$ inside the window
$[m,2m-1]$, or a prefix of $Q$ matching a suffix of the window and the
complementary suffix of $Q$ matching a prefix of the window. 
This implies that $Q$ solves the cyclic $(m,k)$-problem.
\end{proof}

\begin{example}
\label{example1}
Observe that the seed \SHAPE{\#\#\#-\#} solves the cyclic $(7,2)$-problem.
From Lemma~\ref{cyclic}, this implies that 
for every $i\geq 0$, the $(11+7 i,2)$-problem is solved by the seed
$[\SHAPE{\#\#\#-\#},\SHAPE{--}]^i$ of span $5 + 7i$. 
Moreover, for $i=1,2,3$, this seed is optimal (maximally weighted) over all seeds
solving the problem. 

By a similar argument based on Lemma~\ref{cyclic}, the periodic seed
$[\SHAPE{\#\#\#\#\#-\#\#},\SHAPE{---}]^i$ solves the 
$(18 + 11 i,2)$-problem. Note that its weight grows as $\frac{7}{11}m$
compared to $\frac{4}{7} m$ for the seed from the 
previous paragraph. However, when $m\rightarrow\infty$, this is not an asymptotically optimal
bound, as we will see later. 

The $(18 + 11 i,3)$-problem is solved by the seed
$(\SHAPE{\#\#\#-\#--\#},\SHAPE{---})^i$, as seed \SHAPE{\#\#\#-\#--\#}
solves the cyclic $(11,3)$-problem. For $i=1,2$, the former is a
maximally weighted seed among all solving the $(18+11i,3)$-problem. 
\end{example}

One question raised by these examples is whether iterating some seed
could provide an asymptotically optimal solution, i.e. a seed of 
maximal
asymptotic weight. The following theorem establishes a tight asymptotic
bound on the weight of an optimal seed, for a fixed number of
mismatches. It gives a negative answer to this question, as it shows
that the maximal weight grows faster than any linear fraction of the
similarity size. 

\begin{theorem}
\label{asympt}
Consider a constant $k$. Let $w(m)$ be the maximal weight of a seed
solving the cyclic $(m,k)$-problem. Then $(m-w(m))=\Theta(m^{\frac{k-1}{k}})$.
\end{theorem}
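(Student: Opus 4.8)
The plan is to prove the asymptotic bound $(m - w(m)) = \Theta(m^{(k-1)/k})$ by establishing matching upper and lower bounds on the number of jokers $j(m) = m - w(m)$ that an optimal seed for the cyclic $(m,k)$-problem must contain. Throughout I would think of a seed not by its matching positions but by the \emph{set of joker positions}, and I would work cyclically (modulo $m$), which is natural given that we are solving the cyclic problem.

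\medskip

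\noindent\textbf{Lower bound ($j(m) = \Omega(m^{(k-1)/k})$).} The key observation is that a seed $Q$ fails to detect a cyclic $(m,k)$-similarity with mismatches at positions $\{c_1,\ldots,c_k\}$ precisely when, for \emph{every} cyclic shift of $Q$, at least one matching position lands on a mismatch. Equivalently, $Q$ detects the similarity iff some shift places all $k$ mismatches onto joker positions. So $Q$ solves the cyclic problem iff for every choice of $k$ positions on the cycle $\mathbb{Z}/m\mathbb{Z}$, there is a shift of the joker set $J$ that covers all of them. Writing the jokers as a set $J \subseteq \mathbb{Z}/m\mathbb{Z}$ of size $j$, the relevant combinatorial object is the set of \emph{difference vectors}: the collection of all $(k-1)$-tuples of pairwise differences realizable within a single shifted copy of $J$. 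A shift of $J$ covers a mismatch-set $\{c_1,\ldots,c_k\}$ only if all the differences $c_i - c_1$ lie in the difference set $J - J$. The number of distinct $(k-1)$-tuples of differences that $J$ can supply is at most $\binom{j}{k}\cdot k!$-ish, i.e. $O(j^{k})$ after accounting for shifts, whereas the number of mismatch-patterns that must be covered (up to the one free shift parameter) grows like $m^{k-1}$. A counting argument of the form ``$O(j^k) \geq \Omega(m^{k-1})$ shifts' worth of coverage'' forces $j^k = \Omega(m^{k-1})$, i.e. $j = \Omega(m^{(k-1)/k})$. I would make this precise by fixing $c_1 = 0$ (using one shift) and arguing that the $m^{k-1}$ choices of $(c_2,\ldots,c_k)$ must each be hit by one of the $\approx j \cdot j^{k-1} = j^k$ configurations (a shift together with an assignment of the $k$ mismatches to jokers).

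\medskip

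\noindent\textbf{Upper bound ($j(m) = O(m^{(k-1)/k})$).} Here I would exhibit an explicit joker set of size $O(m^{(k-1)/k})$ that solves the cyclic $(m,k)$-problem, i.e. a set $J$ such that every $k$-subset of $\mathbb{Z}/m\mathbb{Z}$ can be translated into $J$. This is essentially a \emph{covering} / perfect-difference-family construction. A natural candidate is a generalized arithmetic-progression / $B_k$-type design: take $J$ to be the union of a coarse grid at spacing $\approx m^{1/k}$ with a nested family of finer grids, so that any $k$ target positions can be simultaneously shifted onto $J$ by choosing residues level by level. I expect the construction to be the sum-set of $k$ arithmetic progressions with geometrically related step sizes $m^{(k-1)/k}, m^{(k-2)/k}, \ldots, m^{1/k}$, whose total size is $k \cdot m^{1/k} \cdot (\text{const})$ contributions multiplying out to $O(m^{(k-1)/k})$ jokers; one then verifies that an arbitrary $k$-set of mismatches can be absorbed greedily, positioning the mismatches one per ``scale.'' Verifying that this set genuinely covers all $k$-subsets (not just generic ones) is the delicate part and would use a radix/mixed-base representation of the gaps between consecutive mismatches.

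\medskip

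\noindent\textbf{Main obstacle.} The harder direction is the upper bound construction: the lower bound is a clean pigeonhole/counting estimate, but producing an explicit seed achieving $O(m^{(k-1)/k})$ jokers and \emph{proving} it covers \emph{every} $k$-mismatch pattern requires a carefully layered multi-scale design and an induction on $k$ (peeling off one mismatch and one scale at a time, reducing a $k$-mismatch covering problem at scale $m$ to a $(k-1)$-mismatch problem at scale $m^{(k-1)/k}$). I would structure the upper-bound proof as such an induction, with the base case $k=1$ being a single arithmetic progression of step $\approx m$ (a constant number of jokers, consistent with $m^{0}$), and the inductive step combining a coarse progression with a recursively-constructed finer seed. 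Matching the constant factors is not needed since we only claim $\Theta$, which simplifies the bookkeeping considerably.
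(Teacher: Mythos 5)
Your plan coincides with the paper's proof in both directions: the lower bound is the same pigeonhole count (a seed with $n$ jokers detects at most $\binom{n}{k}$ cyclic similarity classes, while there are at least $\binom{m}{k}/m$ classes to cover, forcing $n=\Omega(m^{(k-1)/k})$), and the upper bound is the same multi-scale construction --- with $B=\lceil m^{1/k}\rceil$, place intervals of $B^{i-1}$ consecutive jokers at period $B^{i}$ for $i=1,\dots,k$, and prove coverage by exactly the induction you describe, absorbing one mismatch per scale using the $B^{i-1}$-periodicity of the previous layer. The one point to fix when writing it up is that the joker set must be the \emph{union} of these $k$ layers (each of size $O(m^{(k-1)/k})$), not a literal sumset of $k$ progressions of size $m^{1/k}$, which would be too large; your ``nested grids'' phrasing already captures the correct version.
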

\begin{proof}
Note first that all seeds solving cyclic a $(m,k)$-problem can be considered
as seeds of span $m$. The number of jokers in any seed $Q$ is then
$n = m-w(Q)$. The theorem states that the minimal number
of jokers of a seed solving the $(m,k)$-problem is
$\Theta(m^{\frac{k-1}{k}})$ for every fixed $k$.

{\it Lower bound}
Consider a cyclic $(m,k)$-problem. 
The number $D(m,k)$ of distinct cyclic $(m,k)$-similarities satisfies
\begin{equation}
\frac{{m \choose k}}{m} \leq D(m,k),
\label{1}
\end{equation}
as every linear $(m,k)$-similarity has at most $m$ cyclicly equivalent
ones. 
Consider a seed $Q$. Let 
$n$ be the number of jokers in $Q$ and $J_Q(m,k)$ the
number of distinct cyclic $(m,k)$-similarities 
detected by $Q$. Observe that $J_Q(m,k) \leq {n \choose k}$
and if $Q$ solves the cyclic $(m,k)$-problem, then 
\begin{equation}
D(m,k) = J_Q(m,k) \leq {n \choose k}
\label{2}
\end{equation}
From (\ref{1}) and (\ref{2}), we have
\begin{equation}
\frac{{m \choose k}}{m} \leq {n \choose  k}.
\label{3}
\end{equation}
Using the Stirling formula, this gives $n(k) = \Omega(m^{\frac{k-1}{k}})$.

\begin{figure}[htb]
  \begin{center}
    \includegraphics[width=8cm]{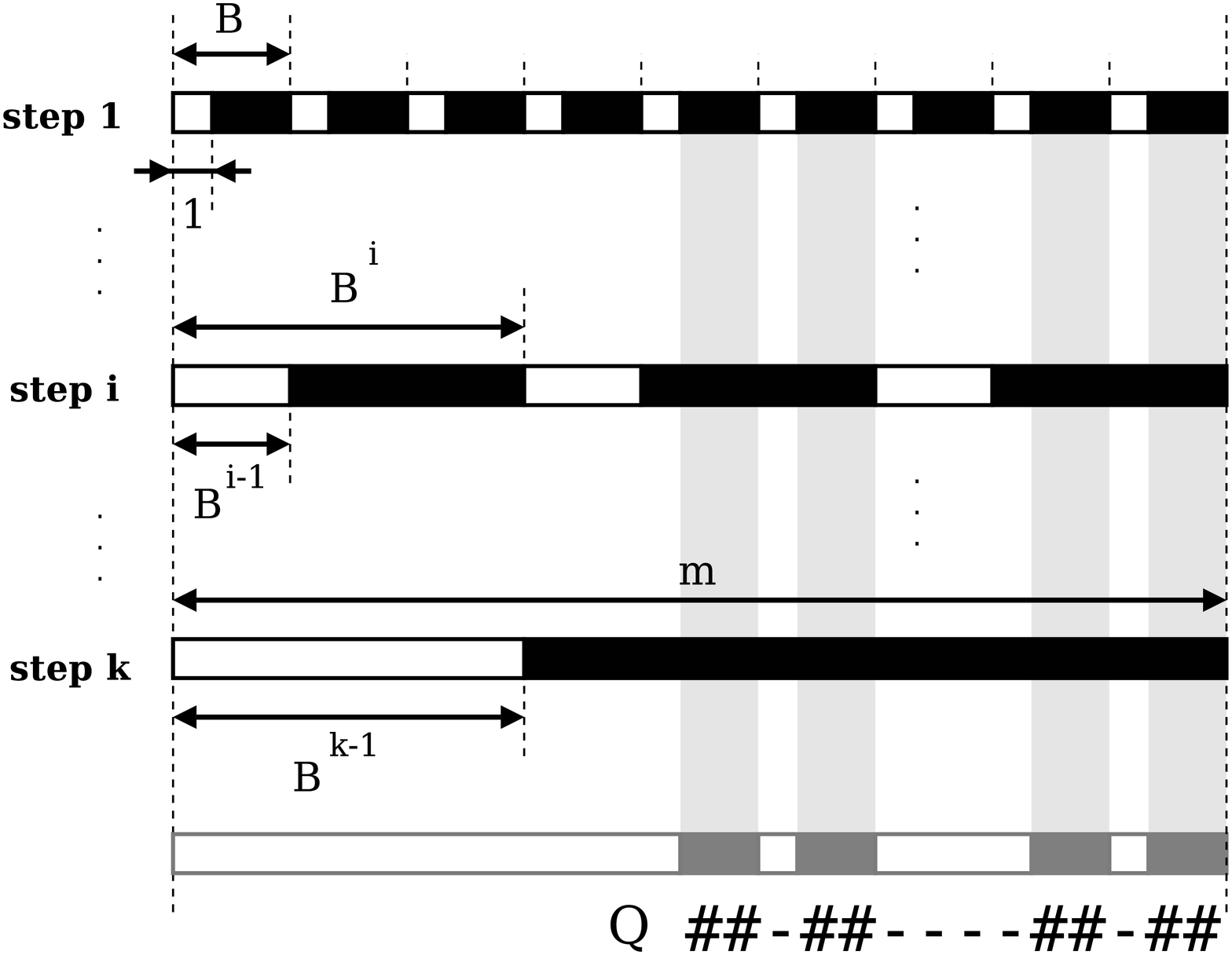}
    \caption{\label{seedbuilt}Construction of seeds $Q_i$ from the
      proof of Theorem~\ref{asympt}. 
Jokers are represented in white and matching positions in black. 
      }
  \end{center}
\end{figure}

{\it Upper bound}
To prove the upper bound, we construct a seed $Q$ that has no more
then $ k \cdot m^{\frac{k-1}{k}}$ joker positions and solves the cyclic
$(m, k)$-problem.

We start with the seed
$Q_0$ of span $m$ with all matching positions, and introduce jokers
into it in $k$ steps. After step $i$, the obtained seed is denoted
$Q_i$, and $Q=Q_k$. 

Let $B = \lceil m^{\frac{1}{k}} \rceil$. 
$Q_1$ is obtained by introducing into $Q_0$ individual jokers with
periodicity $B$ by placing jokers at positions $1,B+1,2B+1,\ldots$. 
At step 2, we introduce into $Q_1$ contiguous intervals of jokers of
length $B$ with periodicity $B^2$, such that 
jokers are placed at positions $[1\ldots B], [B^2+1\ldots B^2+B],[2B^2+1 \ldots 2B^2+B],\ldots$.

In general, at step $i$ ($i \leq k$), we introduce into $Q_{i}$ intervals of $B^{i-1}$ jokers with periodicity $B^i$ at
positions $[1 \ldots B^{i-1}], [ B^i+1 \ldots B^i+B^{i-1}],\ldots$ 
(see Figure~\ref{seedbuilt}).

Note that $Q_i$ is periodic with periodicity $B^{i}$. Note also that
at each step $i$, we introduce at most $\lfloor m^{1-\frac{i}{k}}
\rfloor$ intervals of $B^{i-1}$ jokers. Moreover, due to overlaps with
already added jokers, each interval adds $(B-1)^{i-1}$ new jokers. 

This implies that the total number of jokers added at step $i$ is at most
$m^{1-\frac{i}{k}} \cdot (B-1)^{i-1} \leq m^{1-\frac{i}{k}} \cdot
m^{\frac{1}{k} \cdot (i-1)} = m^{\frac{k-1}{k}}$.
Thus, the total number of jokers in $Q$ is less than $k \cdot m^{\frac{k-1}{k}}$.

By induction on $i$, we prove that for any $(m,i)$-similarity $u$
($i\leq k$), $Q_i$ detects $u$ cyclicly, that is there is a cyclic shift
of $Q_i$ 
such that all $i$ mismatches of $u$
are covered with jokers introduced at steps $1,\ldots,i$. 

For $i=1$, the statement is obvious, as we can always cover the single
mismatch by shifting $Q_1$ by at most $(B-1)$ positions. 
Assuming that the statement holds for $(i-1)$, we show now that it holds for $i$ too. 
Consider an $(m,i)$-similarity $u$. Select one mismatch of $u$. By
induction hypothesis, the other $(i-1)$ mismatches can be covered by
$Q_{i-1}$. Since $Q_{i-1}$ has period $B^{i-1}$ and $Q_i$ differs from
$Q_{i-1}$ by having at least one contiguous interval of $B^{i-1}$
jokers, we can always shift $Q_i$ by $j\cdot B^{i-1}$ positions such
that the selected mismatch falls into this interval. This shows that
$Q_i$ detects $u$. We conclude that $Q$ solves the cyclic
$(m,i)$-problem.
\end{proof}

Using Theorem~\ref{asympt}, we obtain the following bound on the
number of jokers for the {\em linear} $(m,k)$-problem. 
\begin{lemma}
\label{asympt-linear}
Consider a constant $k$. Let $w(m)$ be the maximal weight of a seed
solving the linear $(m,k)$-problem. Then $(m-w(m))=\Theta(m^{\frac{k}{k+1}})$.
\end{lemma}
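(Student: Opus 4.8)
The plan is to prove matching lower and upper bounds on the quantity $m-w(m)$, which I abbreviate as $j=m-w(m)$, the minimal ``weight deficit'' of a seed solving the linear $(m,k)$-problem. The essential difference with the cyclic situation of Theorem~\ref{asympt} is that a seed solving the \emph{linear} problem has span $s\le m$ and can be placed at several positions, so its deficit splits as $j=(m-s)+n$, where $n=s-w$ is the genuine number of jokers. Both halves of the argument must track this splitting, and it is precisely this that turns the cyclic exponent $\frac{k-1}{k}$ into $\frac{k}{k+1}$.

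\emph{Lower bound.} I would run a counting estimate in the spirit of the lower bound of Theorem~\ref{asympt}. Fix a seed $Q$ of span $s$ and weight $w$ solving the $(m,k)$-problem, and put $j=m-w$. A placement of $Q$ at a position $i\in[1..m-s+1]$ detects exactly those $(m,k)$-similarities whose $k$ zeros all avoid the $w$ matching positions of that placement; since the complement of these positions inside $[1..m]$ has size $m-w=j$, each placement detects exactly $\binom{j}{k}$ similarities. As $Q$ must detect all $\binom{m}{k}$ similarities and there are only $m-s+1$ placements, I get $\binom{m}{k}\le(m-s+1)\binom{j}{k}$. The decisive observation is that $m-s\le j$ (because $j=(m-s)+n$ with $n\ge0$), whence $m-s+1\le j+1$ and therefore $\binom{m}{k}\le(j+1)\binom{j}{k}$. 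Comparing polynomial orders in $m$ and $j$ (the left side is $\Theta(m^{k})$, the right side $\Theta(j^{k+1})$ for fixed $k$) gives $m^{k}=O(j^{k+1})$, i.e.\ $j=\Omega(m^{\frac{k}{k+1}})$.

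\emph{Upper bound.} Here I would feed the cyclic construction of Theorem~\ref{asympt} into the iteration of Lemma~\ref{cyclic}. Let $p$ be a parameter to be fixed and take, by Theorem~\ref{asympt}, a seed $P$ of span $p$ solving the cyclic $(p,k)$-problem with only $O(p^{\frac{k-1}{k}})$ jokers. Since $s(P)=p$, the joker blocks $-^{\,p-s(P)}$ of Lemma~\ref{cyclic} are empty, so $P_i=P^{i+1}$ solves the linear $(p(i+2)-1,k)$-problem and has span $p(i+1)$. Choosing $i$ maximal subject to $p(i+2)-1\le m$ keeps both the span and the solved length at most $m$; and since the $(m,k)$-problem is monotone in $m$ (a seed solving $(M,k)$ solves every $(m,k)$ with $m\ge M$, by an easy induction that strips the first symbol of a similarity), $P^{i+1}$ then also solves the $(m,k)$-problem. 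Its deficit is $m-w=(m-p(i+1))+(i+1)\cdot O(p^{\frac{k-1}{k}})=O(p)+O(m\,p^{-\frac1k})$, using $m-p(i+1)=O(p)$ and $i+1=O(m/p)$ from the maximality of $i$. Balancing the two terms by setting $p=\Theta(m^{\frac{k}{k+1}})$ makes each $\Theta(m^{\frac{k}{k+1}})$, so $m-w(m)=O(m^{\frac{k}{k+1}})$; together with the lower bound this yields $\Theta(m^{\frac{k}{k+1}})$.

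\emph{Main obstacle.} I expect the subtle point to be the lower-bound bookkeeping, namely noticing that the number of admissible placements is itself controlled by the deficit, $m-s+1\le j+1$: this extra factor of $j$ is exactly what upgrades the cyclic exponent $\frac{k-1}{k}$ to $\frac{k}{k+1}$, and overlooking it collapses the estimate back to the cyclic bound. On the construction side the only care needed is getting the monotonicity direction right (solving a \emph{shorter} problem implies solving a longer one, so I must arrange $p(i+2)-1\le m$, not $\ge m$) and handling the rounding in the joint choice of $i$ and $p$; both are routine once the balancing value $p=\Theta(m^{\frac{k}{k+1}})$ has been identified.
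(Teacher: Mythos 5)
Your proposal is correct and follows essentially the same route as the paper: the lower bound is the identical counting argument $\binom{m}{k}\le (m-s(Q)+1)\binom{n}{k}$ with $m-s(Q)\le n$, and the upper bound likewise iterates the cyclic construction of Theorem~\ref{asympt} with period $l=\Theta(m^{\frac{k}{k+1}})$ chosen to balance the truncation loss $O(l)$ against the accumulated jokers $O(m\,l^{-1/k})$. The only (harmless) difference is that you use an integer number of copies plus monotonicity in $m$ via Lemma~\ref{cyclic}, where the paper pads with partial copies $P'\cdot P\cdots P\cdot P''$ to fill span $m-l$ exactly.
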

\begin{proof}
To prove the upper bound, we construct a seed $Q$ that solves the
linear $(m,k)$-problem and satisfies the asymptotic bound. 
Consider some $l<m$ that will be defined later, and let $P$ be a
seed that solves the cyclic $(l,k)$-problem. 
Without loss of generality, we assume $s(P)=l$. 

For a real number $e \geq 1$,
define $P^e$ to be the maximally weighted seed of span at most $l^e$
of the form $ P' \cdot P\cdots P \cdot  P''$, where $P'$ and $P''$
are respectively a suffix and a prefix of $P$. 
Due to the condition of maximal weight, $w(P^e)\geq e\cdot w(P)$. 

We now set $Q=P^e$ for some real $e$ to be defined. Observe that 
if $e\cdot l\leq m-l$, then $Q$ solves the linear
$(m,k)$-problem. Therefore, we set $e=\frac{m-l}{l}$. 

From the proof of Theorem~\ref{asympt}, we have $l-w(P)\leq k\cdot 
l^\frac{k-1}{k}$. We then have
\begin{equation}{\label{equation:linear01}}
  w(Q) = e \cdot w(P) \geq \frac{m-l}{l} \cdot (l - k \cdot l^{\frac{k-1}{k}}).
\end{equation}
If we set
\begin{equation}{\label{equation:linear02}}
  l = m^{\frac{k}{k+1}},
\end{equation}
we obtain 
\begin{equation}{\label{equation:linear03}}
  m - w(Q)  \leq (k+1) {m}^{\frac{k}{k+1}} -k
  {m}^{\frac{k-1}{k+1}},
\end{equation}
and as $k$ is constant, 
\begin{equation}{\label{equation:linear04}}
  m - w(Q) = \mathcal{O}(m^{\frac{k}{k+1}}).
\end{equation}

The lower bound is obtained similarly to Theorem~\ref{asympt}. Let $Q$
be a seed solving a linear $(m,k)$-problem, and let $n=m-w(Q)$. From
simple combinatorial considerations, we have 
\begin{equation}
{m \choose k} \leq {n \choose  k}\cdot (m-s(Q))\leq {n \choose
  k}\cdot n,
\end{equation}
which implies $n=\Omega(m^\frac{k}{k+1})$ for constant $k$.
\end{proof}

The following simple lemma is also useful for constructing efficient
seeds. 
\begin{lemma}
\label{cut}
Assume that a family  $F$ solves an $(m,k)$-problem. Let $F'$ be the family
obtained from $F$ by cutting out $l$ characters from the left and $r$
characters from the right of each seed of $F$. Then $F'$ solves the
$(m-r-l,k)$-problem. 
\end{lemma}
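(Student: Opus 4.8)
The plan is to reduce the smaller detection problem to the larger one by a padding argument, and then transport a detecting seed back through the cut. First I would take an arbitrary $(m-l-r,k)$-similarity $v$ and pad it with matching characters on both ends, setting $w = 1^l\, v\, 1^r$. Since the padding introduces only $1$'s, the word $w$ has length $l+(m-l-r)+r=m$ and exactly $k$ zeros, hence is a genuine $(m,k)$-similarity. As $F$ solves the $(m,k)$-problem, some seed $Q\in F$ detects $w$; say $Q$ matches $w$ at a position $i$ with $1\le i\le m-s(Q)+1$. Let $Q'\in F'$ be the corresponding cut seed.

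Next I would show that $Q'$ matches $v$ at the same position $i$. Viewing seeds as words over $\{\SHAPE{\#},\SHAPE{-}\}$ and matching as a sliding window, a matching position $t$ of $Q'$ (with $0\le t\le s(Q')-1$) is exactly the matching position $t+l$ of $Q$, since $Q'$ drops the first $l$ and last $r$ characters of $Q$. Because $Q$ matches $w$ at $i$, we have $w[i+t+l]=1$. The key accounting step is that all these positions fall inside the $v$-block of $w$: the lower bound $t\ge 0$ gives $i+t+l\ge l+1$, and combining $t\le s(Q')-1=s(Q)-l-r-1$ with $i\le m-s(Q)+1$ gives $i+t+l\le m-r$, so every such position lies in the range $[l+1,m-r]$ occupied by $v$. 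Using $w[l+j]=v[j]$, this yields $v[i+t]=1$ for every matching position $t$ of $Q'$.

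It remains to check that $i$ is a legal occurrence position for $Q'$ inside $v$. The same span relation $s(Q')=s(Q)-l-r$ together with $i\le m-s(Q)+1$ gives $i+s(Q')-1\le m-l-r=|v|$, and $i\ge 1$, so $Q'$ fits within $v$ when placed at $i$. Hence $Q'$ detects $v$, and since $v$ was an arbitrary $(m-l-r,k)$-similarity, $F'$ solves the $(m-l-r,k)$-problem.

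The step I expect to require the most care is the boundary bookkeeping in the second paragraph: confirming that the matching positions of the cut seed neither spill into the padded $1^l$/$1^r$ blocks nor run off the end of $v$. This is precisely what $s(Q')=s(Q)-l-r$ and the occurrence bound $i\le m-s(Q)+1$ guarantee. The deliberate choice to pad with $1$'s (rather than arbitrary symbols) is what keeps the mismatch count equal to $k$, so that $w$ is a valid $(m,k)$-similarity; everything else is routine index translation.
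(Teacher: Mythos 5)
Your proof is correct. The paper actually states Lemma~\ref{cut} without any proof (it is introduced only as a ``simple lemma''), so there is nothing to compare against; your padding argument --- embedding an $(m-l-r,k)$-similarity $v$ as $1^l v 1^r$, invoking a detecting seed $Q\in F$, and checking via $s(Q')=s(Q)-l-r$ that every surviving matching position of the cut seed lands inside the $v$-block --- is the natural proof and the index bookkeeping is right. One point worth making explicit is that $Q'$ may lose some matching positions of $Q$ (those falling in the cut-off margins), but this only weakens the seed's constraints, so the implication still goes the right way.
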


\begin{example}
The $(9+7i,2)$-problem is solved by the seed
$[\SHAPE{\#\#\#},\SHAPE{-\#--}]^i$ which is optimal for $i=1,2,3$. 
Using Lemma~\ref{cut}, this seed can be immediately obtained from the seed
$[\SHAPE{\#\#\#-\#},\SHAPE{--}]^i$
from Example~\ref{example1}, solving the $(11+7i,2)$-problem.
\end{example}

We now apply the above results for the single seed case to the case of
multiple seeds. 

For a seed $Q$ considered as a word over \{\SHAPE{\#},\SHAPE{-}\}, we
denote by $Q_{[i]}$ its cyclic shift to the left by $i$
characters. For example, if $Q = \SHAPE{\#\#\#\#-\#-\#\#--}$, then
$Q_{[5]} = \SHAPE{\#-\#\#--\#\#\#\#-}$. 
The following lemma gives a way to construct seed families solving
bigger problems from an individual seed solving a smaller cyclic
problem.

\begin{lemma}
\label{lemma-final}
Assume that a seed $Q$ solves a cyclic $(m,k)$-problem and assume that
$s(Q)=m$ (otherwise we pad $Q$ on the right with $(m-s(Q))$ jokers). 
Fix some $i>1$. 
For some $L>0$, consider a list of $L$ integers 
$0 \leq j_1 < \cdots < j_L < m$, and define a family of seeds
$F=<\| (Q_{[j_l]})^i\|>_{l=1}^L$, where $\| (Q_{[j_l]})^i\|$ stands for the
seed obtained from $(Q_{[j_l]})^i$ by deleting the joker characters at
the left and right edges. 
Define $\delta(l)=((j_{l-1}-j_{l})\mod m)$ (or, alternatively,
$\delta(l)=((j_{l}-j_{l-1})\mod m)$) for all $l$, $1\leq l\leq L$. 
Let $m'=\max\{s(\| (Q_{[j_l]})^i \|)+\delta(l)\}_{l=1}^L -1$.
Then $F$ solves the 
$(m',k)$-problem.
\end{lemma}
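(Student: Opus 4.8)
The plan is to mimic the $\Rightarrow$ direction of Lemma~\ref{cyclic}, reducing the detection of a long linear similarity to the cyclic solvability of $Q$. First I would take an arbitrary $(m',k)$-similarity $u$ and project its (at most $k$) mismatch positions modulo $m$, obtaining a cyclic word $u'$ of length $m$ with at most $k$ zeros (padded to exactly $k$ mismatches if necessary). Since $Q$ solves the cyclic $(m,k)$-problem, there is a cyclic offset $j$, $0\le j<m$, at which $Q$ detects $u'$; unfolding this, every mismatch position $q$ of $u$ satisfies $Q[(q+j)\bmod m]=\SHAPE{-}$, i.e.\ each mismatch of $u$, wherever it sits in $[1,m']$, lies under a joker of the left shift $Q_{[j]}$ read periodically.

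The key observation is then that the bi-infinite periodic word obtained by repeating $Q_{[j]}$, aligned against $u$ so that position $q$ of $u$ meets position $q$ of the pattern, places \emph{every} one of its matching positions over a $1$ of $u$ throughout $[1,m']$ (a matching position over a $0$ would be a mismatch projecting onto a \SHAPE{\#} of $Q_{[j]}$, contradicting the coverage). Consequently, any contiguous window of this aligned pattern that fits inside $[1,m']$ is automatically an occurrence. Now each candidate seed $\|(Q_{[j_l]})^i\|$ is exactly a window of the periodic extension of $Q_{[j_l]}$, and the extensions of $Q_{[j_l]}$ and of $Q_{[j]}$ are shifts of one another by $(j_l-j)$. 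So the whole problem collapses to: pick an index $l$ and a placement of $\|(Q_{[j_l]})^i\|$ so that, after compensating this shift, the seed coincides with a window of the $u$-aligned periodic pattern that lies within $[1,m']$.

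To choose $l$ I would use the cyclic gaps: the shifts $0\le j_1<\cdots<j_L<m$ cut the cycle of length $m$ into $L$ arcs of lengths $\delta(1),\ldots,\delta(L)$ summing to $m$, and the offset $j$ falls into the arc terminating at some $j_l$, so the shift needed to realign $Q_{[j_l]}$ onto offset $j$ is at most $\delta(l)-1$. The hypothesis $i>1$ is what supplies a full extra period of the pattern, guaranteeing that a complete trimmed copy can be slid into the window at this shift rather than running off its end. The fitting then reduces to the single inequality $(\text{needed shift})+s(\|(Q_{[j_l]})^i\|)\le m'$, which holds because the shift is $\le\delta(l)-1$ and $m'=\max_l\{s(\|(Q_{[j_l]})^i\|)+\delta(l)\}-1\ge s(\|(Q_{[j_l]})^i\|)+\delta(l)-1$.

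I expect the main obstacle to be precisely this fitting step, because of the trimming of the leading and trailing jokers in the definition of $\|(Q_{[j_l]})^i\|$. Deleting edge jokers shifts the correspondence between a placement position and the effective cyclic offset it realizes, so one must check carefully that for the arc-selected index $l$ the \emph{trimmed} seed still admits a placement realizing offset $j$ entirely inside $[1,m']$, and that the per-index budget is governed by that seed's own span together with its own gap $\delta(l)$ (and not a larger quantity). Pinning down these constants so that they match $m'=\max_l\{s(\|(Q_{[j_l]})^i\|)+\delta(l)\}-1$ exactly --- using $i>1$ to absorb the offset up to a full period --- is the delicate part; the rest of the argument is the routine mod-$m$ projection already used in Lemma~\ref{cyclic}.
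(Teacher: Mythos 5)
Your proposal takes essentially the same route as the paper: the paper's own proof is only a two-sentence sketch stating that the argument of Lemma~\ref{cyclic} extends because the family is built so that, for any instance of the linear $(m',k)$-problem, at least one seed realizes the required cyclic offset, and your mod-$m$ projection, arc-based choice of $j_l$, and fitting inequality against $m'=\max_l\{s(\|(Q_{[j_l]})^i\|)+\delta(l)\}-1$ are a faithful (indeed more detailed) elaboration of exactly that. You also correctly flag the only genuinely delicate point --- the bookkeeping introduced by trimming the edge jokers --- which the paper's sketch does not address either.
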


\begin{proof}
The proof is an extension of the proof of Lemma \ref{cyclic}.
Here, the seeds of the family are constructed in such a way that for
any instance of the linear $(m',k)$-problem, there exists at least one
seed that satisfies the property required in the proof of Lemma
\ref{cyclic} and therefore matches this instance. 
\end{proof}

In applying Lemma~\ref{lemma-final}, integers $j_l$ are chosen from the
interval $[0,m]$ in such a way that values $s(||(Q[j_l])^i||) + \delta(l)$
are closed to each other. 
We illustrate Lemma~\ref{lemma-final} with two
examples that follow.

\begin{example}
Let $m=11$, $k=2$. Consider the seed 
$Q = \SHAPE{\#\#\#\#-\#-\#\#--}$ solving the cyclic
$(11,2)$-problem. Choose $i=2$, $L=2$, $j_1 = 0$, $j_2 = 5$. 
This gives two seeds 
$Q_1=\|(Q_{[0]})^2\|=\SHAPE{\#\#\#\#-\#-\#\#--\#\#\#\#-\#-\#\#}$ and 
$Q_2=\|(Q_{[5]})^2\|=\SHAPE{\#-\#\#--\#\#\#\#-\#-\#\#--\#\#\#\#}$
of span $20$ and $21$ respectively, $\delta(1)=6$ and $\delta(2)=5$. 
$\max\{20+6,21+5\}-1=25$. Therefore, family $F=\{Q_1,Q_2\}$ solves the
$(25,2)$-problem.
\end{example}

\begin{example}
Let $m=11$, $k=3$. The seed $Q = \SHAPE{\#\#\#-\#--\#---}$ solving the
cyclic $(11,3)$-problem. Choose $i=2$, $L=2$, $j_1 = 0$, $j_2 =
4$. The two seeds are
$Q_1=\|(Q_{[0]})^2\|=\SHAPE{\#\#\#-\#--\#---\#\#\#-\#--\#}$ (span
$19$) and
$Q_2=\|(Q_{[4]})^2\|=\SHAPE{\#--\#---\#\#\#-\#--\#---\#\#\#}$ (span
$21$), with  $\delta(1)=7$ and
$\delta(2)=4$. $\max\{19+7,21+4\}-1=25$. Therefore, family
$F=\{Q_1,Q_2\}$ solves the $(25,3)$-problem.
\end{example}
\end{subsection}

\begin{subsection}{Heuristic seed design}
\label{heuristics}
Results of Sections~\ref{one-joker}-\ref{regular} allow one to construct
efficient seed families in certain cases, but still do not allow 
a systematic seed design. 
Recently, linear programming approaches to designing efficient seed families were
proposed in \cite{XuBrownLiMaCPM04} and in \cite{Brown2004},
respectively for DNA
and protein similarity search. However, neither of these methods 
aims at constructing lossless families. 

In this section, we outline
a heuristic genetic programming algorithm for designing lossless seed
families. The algorithm will be used in the experimental part of this
work, that we present in the next section. Note that this algorithm
uses dynamic programming algorithms of Section~\ref{dyn-progr}. 
Since the algorithm uses standard genetic programming techniques, we
give only a high-level description here without going into all details. 

The algorithm tries to iteratively improve characteristics of a
{\em population} of seed families until it finds a small family that
detects all $(m,k)$-similarities (i.e. is lossless). 
The first step of each iteration is based on screening current
families against a set of {\em difficult similarities} that are
similarities that have been detected by fewer 
families. This set is continually reordered and updated according
to the number of families that don't detect those similarities. 
For this, each set is stored in a tree and the reordering is done
using the {\em list-as-a-tree} principle \cite{OommenDong97}: each
time a similarity is not detected by a family, it is moved towards the
root of the tree such that its height is divided by two. 

For those families that pass through the screening, the number
of undetected similarities is computed by the dynamic programming
algorithm of Section~\ref{undetected}. The family is kept if it produces
a smaller number than the families currently known. An undetected
similarity obtained during this computation is added as a leaf to the
tree of difficult similarities. 

To detect seeds to be improved inside a family, we compute the
contribution of each seed by the dynamic programming algorithm of
Section~\ref{contribution}. The seeds with the least contribution are
then modified with a higher probability. In general, the population of
seed families is evolving by 
{\em mutating} and {\em crossing over} according to the set of
similarities they do not detect. Moreover, random seed families are
regularly injected into the population in order to avoid local
optima.

The described heuristic procedure 
often allows efficient or even optimal
solutions to be computed in a reasonable time. For example, in ten runs of the
algorithm, we found 3 of the 6 existing families of two seeds of weight
14 solving the $(25,2)$-problem. The whole computation took less than 1
hour, compared to a week of computation needed to exhaustively test
all seed pairs. 
Note that the randomized-greedy approach (incremental completion of
the seed set by adding the best random seed) applied a dozen of times
to the same problem yielded only sets of three and sometimes four, but
never two seeds, taking about 1 hour at each run. 

\end{subsection}
\end{section}

\begin{section}{Experiments}\label{experiments}

We describe two groups of experiments that we made. The first one
concerns the design of efficient seed families, and the second one
applies a multi-seed lossless filtration to the identification of
unique oligos in a large set of EST sequences. 

\begin{subsection}*{Seed design experiments}
\begin{table*}[htb]
  \begin{center}
   \caption{\label{seeds25_2} Seed families for $(25,2)$-problem}
    {\scriptsize
      \textsf{
        \begin{tabular}{|llll|}
          \hline
          size & weight & family seeds & $\delta$\\    
          \hline
          1   & 12${}^{e,p,g}$ & {\scriptsize
              \SHAPE{\#\#\#-\#--\#\#\#-\#--\#\#\#-\#}  
}             & $5.96\cdot 10^{-8}$\\
            2   & 14${}^{e,p,g}$ &
            {\scriptsize\SHAPE{\#\#\#\#-\#-\#\#--\#\#\#\#-\#-\#\#}}
& $7.47\cdot 10^{-9}$\\
                &            & {\scriptsize\SHAPE{~~~~~\#-\#\#--\#\#\#\#-\#-\#\#--\#\#\#\#}}                                             &                \\
            3   & 15${}^{p}$ & {\scriptsize\SHAPE{\#--\#\#-\#-\#\#\#\#\#\#--\#\#-\#-\#\#}}                                               & $2.80\cdot 10^{-9}$\\  
                &            & {\scriptsize\SHAPE{~~~~~~\#-\#\#\#\#\#\#--\#\#-\#-\#\#\#\#\#}}                                            &                \\
                &            & {\scriptsize\SHAPE{~~~~~~~~~~\#\#\#\#--\#\#-\#-\#\#\#\#\#\#--\#\#}}                                       &                \\
            4   & 16${}^{p}$ & {\scriptsize\SHAPE{\#\#\#-\#\#-\#-\#\#\#--\#\#\#\#\#\#\#}}                                                & $9.42\cdot 10^{-10}$\\
                &            & {\scriptsize\SHAPE{~~~~\#\#-\#-\#\#\#--\#\#\#\#\#\#\#-\#\#-\#}}                                           &                 \\
                &            & {\scriptsize\SHAPE{~~~~~~~~~\#\#\#--\#\#\#\#\#\#\#-\#\#-\#-\#\#\#}}                                       &                 \\ 
                &            & {\scriptsize\SHAPE{~~~~~~~~~~~~~~\#\#\#\#\#\#\#-\#\#-\#-\#\#\#--\#\#\#}}                                  &                 \\
           6    & 17${}^{p}$ & {\scriptsize\SHAPE{\#\#-\#-\#\#--\#\#\#\#\#\#\#-\#\#\#\#-\#}}                                             & $3.51\cdot 10^{-10}$\\
                &            & {\scriptsize\SHAPE{~~~\#-\#\#--\#\#\#\#\#\#\#-\#\#\#\#-\#-\#\#}}                                          &                 \\
                &            & {\scriptsize\SHAPE{~~~~~~~~~\#\#\#\#\#\#\#-\#\#\#\#-\#-\#\#--\#\#\#}}                                     &                 \\
                &            & {\scriptsize\SHAPE{~~~~~~~~~~~~~\#\#\#-\#\#\#\#-\#-\#\#--\#\#\#\#\#\#\#}}                                 &                 \\
                &            & {\scriptsize\SHAPE{~~~~~~~~~~~~~~~~~\#\#\#\#-\#-\#\#--\#\#\#\#\#\#\#-\#\#\#}}                             &                 \\
                &            & {\scriptsize\SHAPE{~~~~~~~~~~~~~~~~~~~~~~~~\#\#--\#\#\#\#\#\#\#-\#\#\#\#-\#-\#\#--\#}}                    &                 \\
         \hline%
       \end{tabular}%
       }%
     }
 \end{center}%
\end{table*}%
\begin{table*}[h!]
\begin{center}
    \caption{\label{seeds25_3} Seed families for $(25,3)$-problem}%
    {\scriptsize%
      \textsf{%
        \begin{tabular}{|llll|}%
          \hline%
          size & weight & family seeds & $\delta$\\ 
          \hline
          1 & 8 ${}^{e,p,g}$  & \SHAPE{\#\#\#-\#-----\#\#\#-\#}%
  & $1.53\cdot 10^{-5}$\\
          2 & 10${}^{p}$  & \SHAPE{\#\#\#\#-\#-\#\#--\#---\#\#}                                           & $1.91\cdot 10^{-6}$\\
            &             & \SHAPE{~~~~~~~\#\#--\#---\#\#\#\#-\#-\#\#}                                    & \\
          3 & 11${}^{p}$  & \SHAPE{\#---\#\#\#\#-\#-\#\#--\#---\#\#}                                      & $7.16\cdot 10^{-7}$\\
            &             & \SHAPE{~~~~~\#\#\#-\#-\#\#--\#---\#\#\#\#}                                    & \\
            &             & \SHAPE{~~~~~~~~~~~\#\#--\#---\#\#\#\#-\#-\#\#--\#}                            & \\
          4 & 12${}^{p}$  & \SHAPE{\#---\#\#\#\#-\#-\#\#--\#---\#\#\#}                                  & $2.39\cdot 10^{-7}$\\
            &             & \SHAPE{~~~~~\#\#\#-\#-\#\#--\#---\#\#\#\#-\#}                               & \\
            &             & \SHAPE{~~~~~~~~~\#-\#\#--\#---\#\#\#\#-\#-\#\#--\#}                         & \\
            &             & \SHAPE{~~~~~~~~~~~\#\#--\#---\#\#\#\#-\#-\#\#--\#---\#}                     & \\
          \hline    
        \end{tabular}%
        }}%
\end{center}
\end{table*}%

We considered several $(m,k)$-problems. For each problem, and for a
fixed number of seeds in the family, we computed families solving
the problem and realizing the largest possible seed weight (under a natural
assumption that all seeds in a family have the same weight). We also
kept track of the ways (periodic seeds, genetic
programming heuristics, exhaustive search) in which those families can
be computed. 

Tables~\ref{seeds25_2} and \ref{seeds25_3} summarize some results
obtained for the $(25,2)$-problem and the $(25,3)$-problem
respectively. Families of periodic seeds (that can be found using
Lemma~\ref{lemma-final}) are marked with ${}^{p}$, those that are
found using a genetic algorithm are marked with ${}^{g}$, and those
which are obtained by an exhaustive search are marked with ${}^{e}$. 
Only in this latter case, the families are guaranteed to be optimal. 
Families of periodic seeds are shifted according to their construction
(see Lemma~\ref{lemma-final}).

Moreover, to compare the selectivity of different families solving a given
$(m,k)$-problem, we estimated the probability $\delta$ for at least one of the
seeds of the family to match at a given position of a uniform
Bernoulli four-letter sequence. This has been done using
the inclusion-exclusion formula. 

Note that the simple fact of passing from a single seed to a two-seed
family results in a considerable gain in efficiency: in both examples
shown in the tables there a change of about one order magnitude in the
selectivity estimator $\delta$. 

\end{subsection}

\begin{subsection}*{Oligo selection using multi-seed filtering}

An important practical application of lossless filtration is the
selection of reliable oligonucleotides for DNA micro-array
experiments. Oligonucleotides (oligos) are small {DNA} sequences of
fixed size (usually ranging from $10$ to $50$) designed to hybridize
only with a specific region of the genome sequence. In micro-array
experiments, oligos are expected to match ESTs that stem from a given
gene and not to match those of other genes. As the first
approximation, the problem of oligo
selection can then be formulated as the search for strings of a fixed
length that occur in a given sequence but do not occur, within a
specified distance, in other sequences of a given (possibly very
large) sample. Different approaches to this problem apply different
distance measures and different algorithmic techniques
\cite{Stormo01,Kaderali02,Rahmann03,Zheng03}.
The experiments we briefly present here demonstrate that the
multi-seed filtering provides an efficient computation of candidate
oligonucleotides. 
These should then be further processed by complementary methods in order to
take into account other physico-chemical factors occurring in
hybridisation, such as the melting temperature or the possible hairpin structure
of palindromic oligos.

Here we adopt the formalization of the oligo selection problem as the problem of identifying
  in a given sequence (or a sequence database) all substrings of length $m$
  that have no occurrences elsewhere in the sequence within the Hamming
  distance $k$. The parameters $m$ and $k$ were set to $32$ and
  $5$ respectively. For the $(32,5)$-problem, different seed families
  were designed and their selectivity was estimated. Those are summarized
  in the table in  Figure~\ref{seeds32_5}, using the same conventions as
  in Tables~\ref{seeds25_2} and \ref{seeds25_3} above. The family
  composed of 6 seeds of weight 11 was selected for the filtration
  experiment (shown in Figure~\ref{seeds32_5}). 

  \begin{figure}[htb]%
\centering
      {\scriptsize%
        \textsf{%
          \begin{tabular}{|c|c|c|c|}
            \hline
            family size & weight & $\delta$\\  
            \hline
            1  & 7${}^{e}$  & $6.10\cdot 10^{-5}$\\
            2  & 8${}^{e}$  & $3.05\cdot 10^{-5}$\\
            3  & 9${}^{e}$  & $1.14\cdot 10^{-5}$\\
            4  & 10${}^{g}$ & $3.81\cdot 10^{-6}$\\
            6  & 11${}^{g}$ & $1.43\cdot 10^{-6}$\\
            10 & 12${}^{g}$ & $5.97\cdot 10^{-7}$\\
            \hline%
          \end{tabular}
          }%
        }%
      ~~~~~~~ 
      {\scriptsize
        \textsf{ 
          \begin{tabular}{l}
            \SHAPE{\{~\#\#\#\#---\#---------\#---\#--\#\#\#\#~,}\\
            \SHAPE{~~\#\#\#--\#--\#\#--------\#-\#\#\#\#~,}\\
            \SHAPE{~~\#\#\#\#----\#--\#--\#\#-\#\#\#~,}\\
            \SHAPE{~~\#\#\#-\#-\#---\#\#--\#\#\#\#~,}\\
            \SHAPE{~~\#\#\#-\#\#-\#\#--\#-\#-\#\#~,}\\
            \SHAPE{~~\#\#\#\#-\#\#-\#-\#\#\#\#}~\}\\
          \end{tabular}%
          }%
        }
      \caption{\label{seeds32_5} Computed seed families for the
        $(32,5)$-problem and the chosen family (6 seeds of
        weight 11)}%
  \end{figure}%

The filtering has been applied to a database of rice EST sequences 
  composed of 100015 sequences for a total length of
  42,845,242 bp \footnote{source : {\tt
      http://bioserver.myongji.ac.kr/ricemac.html},   The Korea Rice Genome Database}. 
Substrings matching other substrings with $5$ substitution errors or
less were computed. The computation took slightly more than one hour
on a Pentium\texttrademark  4 3GHz computer. Before applying the filtering using the
family for the $(32,5)$-problem, we made a rough pre-filtering using
one spaced seed of weight $16$ to detect, with a high selectivity,
almost identical regions. 65\% of the database has been discarded by
this pre-filtering. Another 22\% of the database has been filtered out
using the chosen seed family,
leaving the remaining 13\% as oligo candidates. 

\end{subsection}
\end{section}
\begin{section}{Conclusion}
In this paper, we studied a lossless filtration method based on
multi-seed families and demonstrated that it 
represents an improvement compared to the
single-seed approach considered in~\cite{burkhardt03better}. 
We showed how some important characteristics of seed families can be
computed using the dynamic programming. We presented several
combinatorial results that allow one to construct efficient families
composed of seeds with a periodic structure. Finally, we described
a large-scale computational experiment of designing reliable
oligonucleotides for DNA micro-arrays. The obtained experimental results provided evidence of the applicability and
efficiency of the whole method.

The results of Sections~\ref{one-joker}-\ref{regular} establish
several combinatorial properties of seed families, but many more of
them remain to be elucidated.
The structure of optimal or near-optimal seed families
can be reduced to number-theoretic questions, but this
relation remains to be clearly established. 
In general, constructing an algorithm to systematically design seed
families with quality guarantee remains an open problem. 
Some complexity issues remain open too: for example, what is the complexity of
testing if a single seed is lossless for given
$m,k$? Section~\ref{dyn-progr} implies a time bound 
exponential on the number of jokers. 
Note that for multiple seeds, computing the number of detected
similarities is NP-complete
\cite[Section~3.1]{PatternHunter04}. 

Another direction is to consider different distance measures, especially
the Levenstein distance, or at least to allow some restricted
insertion/deletion errors. The method proposed in \cite{BK02} does not
seem to be easily generalized to multi-seed families, and a further work is
required to improve lossless filtering in this case.

{\it Acknowledgements:} G.~Kucherov and L.~No\'e have been supported
by the French {\em Action Sp\'ecifique ``Algorithmes et S\'equences''}
of CNRS. A part of this work has been done during a stay of
M.~Roytberg at LORIA, Nancy, supported by INRIA. 
M.~Roytberg has been supported  by the Russian Foundation for Basic
Research (project nos. 03-04-49469, 02-07-90412) and by grants from
the RF Ministry for Industry, Science, and Technology (20/2002,
5/2003) and NWO.
\end{section}

\bibliographystyle{IEEEtran}
\bibliography{IEEEabrv,paper}
\end{document}